\newtheorem{defi}{Definition}
\newtheorem{exam}{Example}
\newtheorem{lem}{Lemma}
\newtheorem{teo}{Theorem}
\title{On Finding a First-Order Sentence Consistent with a Sample of Strings}
\author{Thiago Alves Rocha
\institute{Department of Computing\\
Federal Institute of Ceará, Brazil}
\email{thiago.alves@ifce.edu.br}
\and
Ana Teresa Martins\thanks{This author was partially supported by the Brazilian National Council for Scientific and Technological Development (CNPq) under the grant number 424188/2016-3.}
\institute{Department of Computing\\
Federal University of Ceará, Brazil}
\email{ana@dc.ufc.br}
\and
Francicleber Martins Ferreira
\institute{Department of Computing\\
Federal University of Ceará, Brazil}
\email{fran@lia.ufc.br}
}
\begin{document}
\maketitle

\begin{abstract}
We investigate the following problem: given a sample of classified strings, find a first-order sentence of minimal quantifier rank that is consistent with the sample. We represent strings as successor string structures, that is, finite structures with unary predicates to denote symbols in an alphabet, and a successor relation. We use results of the Ehrenfeucht–Fraïssé game over successor string structures in order to design an algorithm to find such sentence. We use conditions characterizing the winning strategies for the Spoiler on successor strings structures in order to define formulas which distinguish two strings. Our algorithm returns a boolean combination of such formulas.
\end{abstract}

\section{Introduction}

In this paper, we explore the problem of finding a first-order formula that describes a given sample of classified strings. This problem is meaningful because strings may be used to model sequences of symbolic data such as biological sequences. For instance, in Table~\ref{tableExample1}, we present a sample of classified strings.

\begin{table}[h!]
	\caption{}
	\begin{center}
		\begin{tabular}{ c c c c }
			\hline
			String & Class\\
			\hline 
			$stviil$ & positive \\  
			$ktvive$ & negative  \\
			$stviie$ & positive  \\
			$stpiie$ & negative \\ 
			\hline
		\end{tabular}
	\end{center}
	\label{tableExample1}
\end{table}

The sample in Table~\ref{tableExample1} represents biological sequences which have been associated with a group of diseases called amyloidosis \cite{wieczorek2014induction}. The first-order sentence below represents that $stv$ occurs in a string, and it describes the sample. Variables range over positions in strings, $P_a(i)$ is true if the symbol $a$ occurs in position $i$, and $S$ represents the successor relation over positions.
$$\begin{array}{ll}
\exists x_1 \exists x_2 \exists x_3 (P_s(x_1) \wedge S(x_1, x_2) \wedge P_t(x_2) \wedge S(x_2, x_3) \wedge P_v(x_3)).
\end{array}$$

An algorithm to deal with the problem of finding a formula of minimal quantifier rank consistent with a given sample of structures over an arbitrary vocabulary is introduced in \cite{kaiser12lbg}. As this algorithm works for arbitrary finite relational structures, it runs in exponential time. This algorithm is applied in a general system for learning formulas defining board game rules. These results are also used in finding reductions and polynomial-time programs \cite{jordan2013experiments,jordan2013learning}. The work in \cite{distingqrank2018} investigates a variation of the problem introduced in \cite{kaiser12lbg} when the class of structures is fixed. This study in \cite{distingqrank2018} considers monadic structures, equivalence structures, and disjoint unions of linear orders.

In this paper, we study the problem introduced in \cite{kaiser12lbg} when the sample consists of strings represented by finite structures with a successor relation and a finite number of pairwise disjoint unary predicates. We call such a structure a successor string structure. A sample $S = (P, N)$ consists of two disjoint, finite sets $P, N$ of successor string structures. Given a sample $S$, the problem is to find a first-order sentence $\varphi_S$ of minimal quantifier rank that is consistent with $S$, i.e., it holds in all structures in $P$ and does not hold in any structure in $N$. The size of the sample is the sum of the lengths of all strings in the sample. We intend to solve this problem in polynomial time in the size of $S$.

Ehrenfeucht–Fraïssé games (EF games) \cite{ehrenfeucht1961} is a fundamental technique in finite model theory \cite{Ebbing95FMT,Gradel05FMT} in proving the inexpressibility of certain properties in first-order logic. For instance, first-order logic cannot express that a finite structure has even cardinality. The Ehrenfeucht–Fraïssé game is played on two structures by two players, the Spoiler and the Duplicator. If the Spoiler has a winning strategy for $k$ rounds of such a game, it means that the structures can be distinguished by a first-order sentence $\varphi$ whose quantifier rank is at most $k$, i.e., $\varphi$ holds in exactly one of these structures.

Besides providing a tool to measure the expressive power of a logic, Ehrenfeucht–Fraïssé games allow one to investigate the similarity between structures. In \cite{montanari05}, explicit conditions are provided for the characterization of winning strategies for the Duplicator on successor string structures. Using these conditions, the minimum number of rounds such that the Spoiler has a winning strategy in a game between two such structures can be computed in polynomial time in the size of the structures. This allows one to define a notion of similarity between successor string structures using Ehrenfeucht–Fraïssé games.

An essential part of the algorithm in \cite{kaiser12lbg} is the computation of $r$-Hintikka formulas from structures. An $r$-Hintikka formula is a formula obtained from a structure $\mathcal{A}$ and a positive integer $r$ that describes the properties of $\mathcal{A}$ on the Ehrenfeucht–Fraïssé game with $r$ rounds \cite{Ebbing95FMT}. An $r$-Hintikka formula $\varphi^r_{\mathcal{A}}$ has size exponential in the size of $\mathcal{A}$ and holds exactly on all structures $\mathcal{B}$ such that the Duplicator has a winning strategy for the Ehrenfeucht–Fraïssé game with $r$ rounds on $\mathcal{A}$ and $\mathcal{B}$. Besides, Hintikka formulas are representative because any first-order formula is equivalent to a disjunction of Hintikka formulas.

We use results of the Ehrenfeucht–Fraïssé game over successor string structures \cite{montanari05} in order to design an algorithm to find a sentence which is consistent with the sample in polynomial time. Also, as the size of a Hintikka formula is exponential in the size of a given structure, our algorithm does not use Hintikka formulas. In our case, we define what we call distinguishability formulas. They are defined for two successor string structures $u$, $v$ and a natural number $r$ based on conditions characterizing the winning strategies for the Spoiler on successor strings structures \cite{montanari05}. In this way, we show that distinguishability formulas hold on $u$, do not hold on $v$, and they have quantifier rank at most $r$. This result is also crucial for the definition of our algorithm and to guarantee its correctness. Our algorithm returns a disjunction of conjunctions of distinguishability formulas. We also show that any first-order formula over successor string structures is equivalent to a boolean combination of distinguishability formulas. This result suggests that our approach has the potential to find any first-order sentence.

Our framework is close to grammatical inference. Research in this area investigates the problem of finding a language model of minimal size consistent with a given sample of strings \cite{de2010grammatical}. A language model can be a deterministic finite automaton (DFA) or a context-free grammar, for instance. Grammatical inference has applications in many areas because strings may be used to model text data, traces of program executions, biological sequences, and sequences of symbolic data in general.

A recent model-theoretic approach to grammatical inference is introduced in \cite{strother2017using}. In this approach, it is also used successor string structures to represent strings and first-order sentences as a representation of formal languages. Then, our approach may also be seen as a model-theoretic framework to grammatical inference. The first main difference is that we work with full first-order logic, while the approach in \cite{strother2017using} uses a fragment called CNPL. Formulas of CNPL have the form $\bigwedge_{w_1 \in W_1} \phi_{w_1} \wedge \bigwedge_{w_2 \in W_2} \phi_{w_2}$ such that each $\phi_u$ is a first-order sentence which defines exactly all strings $w$ such that $u$ is a substring of $w$. Also, CNPL is less expressive than first-order logic. Second, given $k$, the goal of the framework in \cite{strother2017using} is to find a CNPL formula such that $max \{ |w| \mid w \in W_1 \cup W_2 \} \leq k$ and $|w|$ is the length of $w$. Our goal is to find a first-order sentence of minimal quantifier rank.

It is well known that a language is definable in first-order logic over successor string structures if and only if it is a locally threshold testable (LTT) language \cite{thomas1982classifying}. A language is LTT if membership of a string can be tested by inspecting its prefixes, suffixes, and infixes up to some length, and counting infixes up to some threshold. The class of LTT languages is a subregular class, i.e., a subclass of the regular languages \cite{rogers2013cognitive}. A grammatical inference algorithm that returns a DFA may return an automaton which recognizes a language not in LTT. Therefore, our results can be useful when one desires to find a model of an LTT language from a sample of strings. We believe that this is the first work on finding a language model of LTT languages from positive and negative strings.

A recent logical framework to find a formula given a sample, also with a model-theoretic approach, can be found in \cite{mGroheRitzert2017FO,groheRitzert2017MSO}. In this framework, a sample consists of classified elements from only one structure. The problem is to find a hypothesis consistent with the classified elements where this hypothesis is a formula from some logic. Recall that, in our framework, samples consist of many classified structures. Another logical framework for a similar problem is Inductive Logic Programming (ILP) \cite{muggleton1991inductive,de2008logical}. ILP uses logic programming as a uniform representation for the sample and hypotheses. As far as we know, our work has no direct relationship with these frameworks.

This paper is organized as follows. In Section~\ref{sec2}, we give the necessary definitions of formal language theory and finite model theory used in this paper. Also in Section~\ref{sec2}, we have an EF game characterization on strings, and, in Section~\ref{sec3}, we translate it into first-order sentences. In Section~\ref{sec3}, we also introduce the concept of distinguishability formulas providing some useful properties. In Section~\ref{sec4}, we introduce our algorithm, give an example of how the algorithm works, and show its correctness. Furthermore, in this section, we briefly discuss how to find a formula with the minimum number of conjunctions. We conclude in Section~\ref{sec5}.

\section{Formal Languages and EF Games on Strings}\label{sec2}

We consider strings over an alphabet $\Sigma$. The set of all such finite strings is denoted by $\Sigma^*$, and the empty string by $\epsilon$. If $w$ is a string, then $|w|$ is the length of $w$. Let $uv$ denote the concatenation of strings $u$ and $v$. For all $u$, $v$, $w$, $x \in \Sigma^*$, if $w = uxv$, then $x$ is a substring of $w$. Moreover, if $u = \epsilon$ (resp. $v = \epsilon$) we say that $x$ is a prefix (resp. suffix) of $w$. We denote the prefix (resp. suffix) of length $k$ of $w$ by $pref_k(w)$ (resp. $suff_k(w)$). Let $i$ and $j$ be positions in a string. The distance between $i$ and $j$, denoted by $d(i, j)$, is $|i - j|$. A formal language is a subset of $\Sigma^*$. A language is locally threshold testable (LTT) if it is a boolean combination of languages of the form $\{ w \mid u$ is prefix of $w \}$, for some $u \in \Sigma^*$, $\{ w \mid u$ is suffix of $w \}$, for some $u \in \Sigma^*$, and $\{ w \mid w$ has $u$ as infix at least $d$ times $\}$, for some $u \in \Sigma^*$ and $d \in \mathbb{N}$ \cite{zeitoun2014separation}. Therefore, membership of a string can be tested by inspecting its prefixes, suffixes and infixes up to some length, and counting infixes up to some threshold. We assume some familiarity with formal languages. See \cite{hopcroft2006} for details.

We view a string $w = a_1...a_n$ as a logical structure $\mathcal{A}_w$ over the vocabulary $\tau = \{ S, (P_a)_{a \in \Sigma}, min, max \}$ with domain $A = \{1, ..., n\}$, that is, the elements of $A$ are positions of $w$. The predicate $S$ is the successor relation and each $P_a$ is a unary predicate for positions labeled with $a$. The constants $min$ and $max$ are interpreted as the positions $1$ and $n$, respectively. We call these structures successor string structures. We assume some familiarity with first-order logic ($FO$), and we use this logic over successor string structures. For details on first-order logic see \cite{Ebbing95FMT,ebbinghaus2013mathematical}. The size of a first-order formula $\varphi$ is the number of symbols occurring in $\varphi$. By the quantifier rank of a formula, we mean the depth of nesting of its quantifiers as in the following.

\begin{defi}[Quantifier Rank]
	Let $\varphi$ be a first-order formula. The quantifier rank of $\varphi$, written $qr(\varphi)$, is defined as
	$$qr(\varphi):=\left\{\begin{array}{llll}
	0,&\textrm{ if }\varphi\textrm{ is atomic }\\
	max(qr(\varphi_1), qr(\varphi_2)),&\textrm{ if }\varphi = \varphi_1 \square \varphi_2\textrm{ such that }\square \in \{\wedge , \vee, \leftarrow\}\\
	qr(\psi),&\textrm{ if }\varphi = \neg \psi\\
	qr(\psi) + 1,&\textrm{ if }\varphi = Qx \psi\textrm{ such that }Q \in \{\exists, \forall\}
	\end{array}\right.$$
\end{defi}

\noindent Given a first-order sentence $\varphi$ over successor string structures, the formal language defined by $\varphi$ is simply $L(\varphi) := \{w \in \Sigma^* \mid \mathcal{A}_w \models \varphi \}$. In general, we do not distinguish between successor string structures and strings. As an example, if $\varphi = \exists x P_a(x)$, then $L(\varphi) = \Sigma^*a\Sigma^*$. LTT languages can be defined in terms of first-order logic. A language is definable by a sentence of $FO$ over successor string structures if and only if it is LTT \cite{thomas1982classifying}.

Now, we can formally define the problem we are interested in. A sample $S = (P, N)$ is a finite number of classified strings consisting of two disjoint, finite sets $P, N \subseteq \Sigma^*$ of strings over an alphabet $\Sigma$. Intuitively, $P$ contains positively classified strings, and $N$ contains negatively classified strings. The size of a sample $S$ is the sum of the lengths of all strings it includes. We use $|S|$ to denote the size of the sample $S$. A sentence $\varphi$ is consistent with a sample $S$ if $P \subseteq L(\varphi)$ and $N \cap L(\varphi) = \emptyset$. Therefore, a sentence is consistent with a sample if it holds in all strings in $P$ and does not hold in any string in $N$. Given a sample $S$, the problem consists of finding a first-order sentence $\varphi$ of minimum quantifier rank such that $\varphi$ is consistent with $S$.

It is well known that every finite structure can be characterized in first-order logic up to isomorphism, i.e., for every finite structure $\mathcal{A}$, there is a first-order sentence $\varphi_{\mathcal{A}}$ such that for all structures $\mathcal{B}$ we have $\mathcal{B} \models \varphi_{\mathcal{A}}$ iff $\mathcal{A}$ and $\mathcal{B}$ are isomorphic. Since samples are finite sets of finite structures, one can easily build in polynomial-time a first-order sentence consistent with a given sample. For example, let $P = \{ bbabbb, baba \}$ and $N = \{ bbbb \}$. The sentence $\varphi_{bbabbb} \vee \varphi_{baba}$ is consistent with the sample. Unfortunately, the quantifier rank of $\varphi_{\mathcal{A}}$ is the number of elements in the domain of $\mathcal{A}$ plus one. Then, $\varphi = \exists x P_a(x)$ is also consistent with the sample and $qr(\varphi) < qr(\varphi_{bbabbb} \vee \varphi_{baba})$. Therefore, $\varphi_{bbabbb} \vee \varphi_{baba}$ is not a solution to the problem.

Now, we focus on Ehrenfeucht–Fraïssé games and its importance in order to solve the problem we are considering. Let $r$ be an integer such that $r \geq 0$, $u$ and $v$ two successor string structures. The Ehrenfeucht–Fraïssé game $\mathcal{G}_{r}(u,v)$ is played by two players called the Spoiler and the Duplicator. Each play of the game has $r$ rounds and, in each round, the Spoiler plays first and picks an element from the domain $A$ of $u$, or from the domain $B$ of $v$. Then, the Duplicator responds by picking an element from the domain of the other structure. Let $a_i \in A$ and $b_i \in B$ be the two elements picked by the Spoiler and the Duplicator in the $i$th round. The Duplicator wins the play if the mapping $(a_1,b_1), ..., (a_r,b_r)$ is an isomorphism between the substructures induced by $a_1, ..., a_r$ and $b_1,...,b_r$, respectively. Otherwise, Spoiler wins this play. We say that a player has a winning strategy in $\mathcal{G}_{r}(u,v)$ if it is possible for her to win each play whatever choices are made by the opponent. In this work, we always assume that $u$ is different from $v$. Note that if $r \geq |u| + |v|$, then the Spoiler has a winning strategy. Therefore, we assume that $r$ is bounded by $|u| + |v|$. Now, we define formulas describing the properties of a structure in EF games.

\begin{defi}[Hintikka Formulas]
	Let $w$ be a structure, $\overline{a} = a_1...a_s \in A^s$, and $\overline{x} = x_1, ..., x_s$ a tuple of variables, 
	$$\varphi^0_{w,\overline{a}}(\overline{x}) := \bigwedge \{ \varphi(\overline{x}) \mid \varphi \textrm{ is atomic or negated atomic and } w \models \varphi[\overline{a}]\},$$
	$$\textrm{and for } r > 0, \textrm{ } \varphi^r_{w,\overline{a}}(\overline{x}) := \bigwedge_{a \in A} \exists x_{s+1} \varphi^{r-1}_{w,\overline{a}a}(\overline{x},x_{s+1}) \wedge \forall x_{s+1} (\bigvee_{a \in A} \varphi^{r-1}_{w,\overline{a}a}(\overline{x},x_{s+1})).$$
\end{defi}

\noindent A Hintikka formula $\varphi^r_{w,\overline{a}}$ describes the isomorphism type of the substructure generated by $\overline{a}$ in $w$. We write $\varphi^r_{w}$ whenever $s = 0$. Given a string $w$ and a positive integer $r$, the size of the $r$-Hintikka formula $\varphi^r_{w}$ is $O(2^r \times |w|^r)$. Therefore, since $r$ is bounded by $|w|$, the size of $\varphi^r_{w}$ is exponential in the size of $w$. The following theorems are important to prove our main results. They are presented in \cite{Ebbing95FMT} (Theorem~2.2.8 and Theorem~2.2.11).

\renewcommand{\labelitemi}{$\bullet$}

\begin{teo}[Ehrenfeucht's Theorem]\label{teoEFHintikka}
	Given $u$ and $v$, and $r \geq 0$, the following are equivalent:
	\begin{itemize}
		\item the Duplicator has a winning strategy in $\mathcal{G}_r(u, v)$.
		\item If $\varphi$ is a sentence of quantifier rank at most $r$, then $u \models \varphi$ iff $v \models \varphi$.
		\item $v \models \varphi^r_{u}$.
	\end{itemize}
\end{teo}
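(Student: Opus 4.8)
The plan is to prove a stronger, relativized statement by induction on $r$: for all successor string structures $u$, $v$, all tuples $\overline{a} \in A^s$, $\overline{b} \in B^s$ of the same length, and all $r \geq 0$, the following are equivalent: (i) the Duplicator has a winning strategy in the $r$-round game starting from the position $(\overline{a}, \overline{b})$; (ii) for every formula $\varphi(\overline{x})$ with $qr(\varphi) \leq r$ we have $u \models \varphi[\overline{a}]$ iff $v \models \varphi[\overline{b}]$; and (iii) $v \models \varphi^r_{u,\overline{a}}[\overline{b}]$. The theorem as stated is the special case $s = 0$. Since the vocabulary $\tau$ is finite and relational and each structure is finite, all the conjunctions and disjunctions in the definition of the Hintikka formulas are finite, so $\varphi^r_{u,\overline{a}}$ is a genuine first-order formula of quantifier rank $r$.

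For the base case $r = 0$, condition (i) says precisely that $\overline{a} \mapsto \overline{b}$ is a partial isomorphism, condition (iii) says that $\overline{b}$ satisfies in $v$ the conjunction of all atomic and negated atomic formulas that $\overline{a}$ satisfies in $u$, and condition (ii) restricts to quantifier-free formulas. All three are equivalent because a quantifier-free type is determined by the atomic and negated atomic formulas it contains, which is exactly the content of $\varphi^0_{u,\overline{a}}$. A small but essential preliminary observation, proved by a separate easy induction on $r$, is that every structure satisfies its own Hintikka formula, i.e. $w \models \varphi^r_{w,\overline{a}}[\overline{a}]$; this is what makes the implication (ii) $\Rightarrow$ (iii) immediate, since $\varphi^r_{u,\overline{a}}$ has quantifier rank $r$ and holds of $\overline{a}$ in $u$.

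For the inductive step I would close the cycle (ii) $\Rightarrow$ (iii) $\Rightarrow$ (i) $\Rightarrow$ (ii). The implication (ii) $\Rightarrow$ (iii) is as above. For (iii) $\Rightarrow$ (i) I would read the Duplicator's strategy directly off the two conjuncts of $\varphi^r_{u,\overline{a}}$: if the Spoiler picks some $a \in A$, the first conjunct $\bigwedge_{a \in A} \exists x_{s+1} \varphi^{r-1}_{u,\overline{a}a}$ together with (iii) guarantees an element $b \in B$ with $v \models \varphi^{r-1}_{u,\overline{a}a}[\overline{b}b]$, which the Duplicator answers; if instead the Spoiler picks some $b \in B$, the second conjunct $\forall x_{s+1} \bigvee_{a \in A} \varphi^{r-1}_{u,\overline{a}a}$ supplies an $a \in A$ with the same property. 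In both cases the new position satisfies condition (iii) for $r-1$, so the induction hypothesis lets the Duplicator continue to win.

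The implication (i) $\Rightarrow$ (ii) is the part I expect to be the main obstacle, because it must convert a strategy, a combinatorial object, into agreement on a whole class of formulas. The key reduction is that, up to logical equivalence over a finite relational vocabulary, there are only finitely many formulas of quantifier rank at most $r$, and each is equivalent to a boolean combination of formulas of the shape $\exists x \psi$ with $qr(\psi) \leq r-1$. It therefore suffices to treat a single such formula: if $u \models \exists x \psi[\overline{a}]$ witnessed by some $a \in A$, I would have the Spoiler play $a$ and invoke the winning strategy to obtain the Duplicator's response $b \in B$; the residual strategy witnesses (i) for the extended position, so by the induction hypothesis $v \models \psi[\overline{b}b]$, giving $v \models \exists x \psi[\overline{b}]$, and symmetrically for the converse. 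The care needed here is precisely in the normal-form reduction and in checking that the subgame strategy after one round is again a winning $(r-1)$-round strategy for the extended position.
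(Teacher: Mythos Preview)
The paper does not supply its own proof of this theorem; it simply cites it as Theorem~2.2.8 of Ebbinghaus--Flum, treating it as a standard background result. Your proposal is correct and is exactly the classical textbook argument that appears in that reference: strengthen to the relativized statement with parameters, induct on $r$, and close the cycle (ii) $\Rightarrow$ (iii) $\Rightarrow$ (i) $\Rightarrow$ (ii) using the two conjuncts of the Hintikka formula to extract the Duplicator's responses. One small inaccuracy: the vocabulary $\tau$ here is not purely relational, since it contains the constants $min$ and $max$, but this is harmless for your argument---constants only add finitely many more atomic formulas, and the finiteness of the Hintikka conjunctions still follows from the finiteness of $\tau$ and of the domain~$A$.
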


\begin{teo}\label{hintikkaRepresentative}
	Let $\varphi$ be a sentence of quantifier rank at most $r$. Then, there exists structures $u_1$, ..., $u_k$ such that
	$$\models \varphi \leftrightarrow (\varphi^r_{u_1} \vee ... \vee \varphi^r_{u_s}).$$
\end{teo}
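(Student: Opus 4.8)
The plan is to exhibit $u_1, \ldots, u_s$ as representatives of the finitely many equivalence classes of structures under the relation $\equiv_r$ of $r$-elementary equivalence (satisfying exactly the same sentences of quantifier rank at most $r$), keeping only those classes whose members satisfy $\varphi$. The argument rests on two consequences of Ehrenfeucht's Theorem: first, that a structure $w$ satisfies $\varphi^r_u$ precisely when $w \equiv_r u$, which is the equivalence of the second and third items of Theorem~\ref{teoEFHintikka}; and second, that whether $\varphi$ holds in $w$ depends only on the $\equiv_r$-class of $w$, since $qr(\varphi) \leq r$.

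First I would establish the crucial finiteness fact: up to logical equivalence, there are only finitely many Hintikka formulas $\varphi^r_{w,\overline{a}}(\overline{x})$ for each fixed number $s = |\overline{x}|$ of free variables. This I would prove by induction on $r$. In the base case $r = 0$, a formula $\varphi^0_{w,\overline{a}}(\overline{x})$ is a conjunction of atomic or negated atomic formulas over $x_1, \ldots, x_s$; over the fixed finite relational vocabulary there are only finitely many such atoms, hence only finitely many such conjunctions up to equivalence. For the inductive step, each $\varphi^r_{w,\overline{a}}$ is a boolean combination of the formulas $\exists x_{s+1}\, \varphi^{r-1}_{w,\overline{a}a}(\overline{x},x_{s+1})$, and by the induction hypothesis there are only finitely many $\varphi^{r-1}$-formulas in $s+1$ free variables up to equivalence; thus $\varphi^r_{w,\overline{a}}$ is determined up to equivalence by the finite collection of distinct $(r-1)$-types it gathers, and there are only finitely many such collections.

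With finiteness in hand, I would let $u_1, \ldots, u_s$ be one representative from each $\equiv_r$-class whose members satisfy $\varphi$ (if there is no such class, $\varphi$ is unsatisfiable and the empty disjunction, read as $\bot$, does the job). Then I would verify $\models \varphi \leftrightarrow (\varphi^r_{u_1} \vee \cdots \vee \varphi^r_{u_s})$ by a double implication. For the forward direction, if $w \models \varphi$ then $w$ lies in a selected class, say that of $u_i$, so $w \equiv_r u_i$ and hence $w \models \varphi^r_{u_i}$ by Theorem~\ref{teoEFHintikka}. For the converse, if $w \models \varphi^r_{u_i}$ for some $i$, then $w \equiv_r u_i$, and since $u_i \models \varphi$ with $qr(\varphi) \leq r$, the second item of Theorem~\ref{teoEFHintikka} yields $w \models \varphi$.

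I expect the main obstacle to be the finiteness step, since the statement quietly asserts that the disjunction can be chosen finite: the remainder is a routine transfer through Ehrenfeucht's Theorem once one knows that $\equiv_r$ has finitely many classes and that each such class is defined by a single Hintikka formula. Care is also needed with the degenerate unsatisfiable case and with confirming that each structure satisfies its own Hintikka formula, which is immediate from Theorem~\ref{teoEFHintikka} applied to the trivially Duplicator-won game $\mathcal{G}_r(u,u)$; this is what guarantees that the chosen representatives genuinely lie in the truth set of the disjunction.
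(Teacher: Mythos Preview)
The paper does not prove this theorem; it merely quotes it from \cite{Ebbing95FMT} (Theorem~2.2.11 there) as a standard result of finite model theory. Your proposal is precisely the standard textbook argument and is correct: establish by induction on $r$ that there are only finitely many Hintikka formulas in a given number of free variables up to equivalence, pick one representative $u_i$ from each $\equiv_r$-class contained in the truth set of $\varphi$, and use Theorem~\ref{teoEFHintikka} for both directions of the equivalence. There is nothing in the paper to compare against beyond the citation, and your handling of the finiteness step and the degenerate unsatisfiable case is sound.
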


\noindent We use Theorem~\ref{hintikkaRepresentative} in order to show that any first-order formula over successor string structures is equivalent to a boolean combination of distinguishability formulas. EF games are essential in our framework because if the Spoiler has a winning strategy in a game on strings $u$ and $v$ with $r$ rounds, then there exists a first-order sentence $\varphi$ of quantifier rank at most $r$ that holds in $u$ and does not hold in $v$. Also, in this case, the sentence $\varphi^r_{u}$ is an example of such a sentence. Unfortunately, over arbitrary vocabularies, the problem of determining whether the Spoiler has a winning strategy is $PSPACE$-complete \cite{pezzoli99}.

However, it is possible to do better in the particular case of EF games on successor string structures. For details see \cite{montanari05}. First, we need the following definitions. Let $A \subseteq \mathbb{N}$. A partition of $A$ is a collection of subsets $X$ of $A$ such that each element of $A$ is included in exactly one subset. An $l$-segmentation of $A$ is a partition of $A$ with the minimum number of subsets such that for all $i, j$ in the same subset, $d(i, j) \leq l$ and if $i, j$ are in the same subset $X$ and $i \leq h \leq j$, then $h \in X$. Each subset $X$ in the partition is called a segment.

In the following, we consider substrings $\alpha$ over $\Sigma$ such that $|\alpha| =  2^{q_{\alpha}} - 1$ for some $q_{\alpha} > 0$. Let $w = w_1...w_n$ be a string such that $w_i \in \Sigma$, for $i \in \{1, ..., n \}$. An occurrence of $\alpha$ is centered on a position $i$ in a string $w$ if $w_{i - (2^{q_{\alpha}-1}-1)}...w_i...w_{i + 2^{q_{\alpha}-1}-1} = \alpha$. An occurrence of $\alpha$ centered on a position $i$ in $w$ is free if $|min^{w} - i| > 2^{q_{\alpha} - 1}$ and $|max^{w} - i| > 2^{q_{\alpha} - 1}$. The set of free occurrences of $\alpha$ in $w$ is $\Gamma(\alpha, w) = \{ i \mid |min^{w} - i| > 2^{q_{\alpha} - 1}, |max^{w} - i| > 2^{q_{\alpha} - 1}, w_{i - (2^{q_{\alpha}-1}-1)}...w_i...w_{i + 2^{q_{\alpha}-1}-1} = \alpha \}$. The free multiplicity of $\alpha$ in $w$, denoted by $\gamma(\alpha, w)$, is the number of free occurrences of $\alpha$ in $w$, i.e., $|\Gamma(\alpha, w)|$. The free scattering of $\alpha$ in $w$, denoted by $\sigma(\alpha, w)$, is the number of segments in a $2^{q_{\alpha}}$-segmentation of $\Gamma(\alpha, w)$.

\begin{exam}
	Let $w = ababababbababaaba$ and $\alpha = aba$. Note that $q_{\alpha} = 2$. The occurrence of $\alpha$ centered on position $2$ in $w$ is not free because $|min^w - 2| \leq 2$. However, the occurrence of $\alpha$ centered on position $4$ in $w$ is free. The set of free occurrences of $\alpha$ in $w$ is $\Gamma(aba, w) = \{4, 6, 11, 13\}$. Therefore, $\gamma(\alpha, w) = 4$. A $2^{q_{\alpha}}$-segmentation of $\Gamma(\alpha, w)$ is $\{ \{4, 6\}, \{11, 13 \} \}$. Then, $\sigma(\alpha, w) = 2$.
\end{exam}

\noindent Now, we have a result of EF games on successor string structures.

\begin{teo}\label{EFGamesOnStrings}\cite{montanari05}
	Let $r$ be a natural number, $u$ and $v$ be strings. The Duplicator has a winning strategy in $\mathcal{G}_r(u, v)$ if and only if the following conditions hold:
	\begin{enumerate}
		\item \label{gameLength} $d(min^u, max^u) = d(min^v, max^v)$ or $(d(min^u, max^u) > 2^r$ and $d(min^v, max^v) > 2^r)$;
		\item \label{gameBorder} $pref_{2^r}(u) = pref_{2^r}(v)$ and $suff_{2^r}(u) = suff_{2^r}(v)$;
		\item \label{gameSub} $\sigma(\alpha, u) + q_{\alpha} > r$ and $\sigma(\alpha, v) + q_{\alpha} > r$ for all $\alpha$ such that $|\alpha| = 2^{q_{\alpha}} - 1$ and $(\sigma(\alpha, u) \neq \sigma(\alpha, v)$ or $\gamma(\alpha, u) \neq \gamma(\alpha, v))$.
	\end{enumerate}
\end{teo}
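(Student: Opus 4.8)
The plan is to prove both implications simultaneously by induction on the number of rounds $r$, via the back-and-forth method. Because the inductive step places pebbles during play, I would first strengthen the statement to configurations in which some positions are already pebbled: a partial play together with the constants $min$ and $max$ cuts each string into segments delimited by consecutive landmarks, and the working invariant is that conditions~\ref{gameLength}--\ref{gameSub}, with $r$ replaced by the number of rounds still to play, hold locally inside each pair of corresponding segments. The base case $r=0$ then only asks that the substructure generated by the landmarks be preserved---in particular the labels at $min$ and $max$ and whether $min=max$---which is exactly what conditions~\ref{gameLength} and~\ref{gameBorder} assert once $2^0=1$ is substituted, while condition~\ref{gameSub} becomes vacuous since $q_\alpha\geq 1$.

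For the direction that the conditions guarantee a win for the Duplicator, I would specify her response to a Spoiler move at a position $p$ by splitting into two regimes dictated by distance doubling. If $p$ lies within distance $2^{r-1}$ of some landmark, the Duplicator copies the exact offset of $p$ from that landmark; condition~\ref{gameBorder} at the endpoints, and the already-established local isomorphism at interior pebbles, ensure the copied position carries the same label and successor relations, and the two refined gaps still satisfy the conditions for $r-1$ rounds. If instead $p$ sits deep inside a wide gap, she picks a position whose window of radius $2^{r-1}-1$ spells the same substring $\alpha$ and that occupies a matching segment of the $2^{q_\alpha}$-segmentation of the free occurrences; here condition~\ref{gameSub} is precisely what supplies enough matching free occurrences and enough scattering budget for the updated configuration to meet the hypothesis for $r-1$.

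For the converse I would argue contrapositively, producing a winning strategy for the Spoiler whenever a condition fails. A failure of condition~\ref{gameLength} is the classical linear-order case: the Spoiler repeatedly halves the longer string until a length mismatch surfaces within $r$ rounds. A failure of condition~\ref{gameBorder} exposes a position at distance at most $2^r$ from $min$ or $max$ whose label differs; the Spoiler reaches it from the endpoint by binary subdivision, spending about $\log_2$ of the distance, hence at most $r$, rounds. A failure of condition~\ref{gameSub} yields a substring $\alpha$ with $\sigma(\alpha,u)+q_\alpha\leq r$ whose free multiplicity or free scattering differs between $u$ and $v$; the Spoiler first uses $q_\alpha$ rounds to pin down a single free occurrence of $\alpha$, identifying its $2^{q_\alpha}-1$ symbols by repeated bisection, and then uses the remaining rounds to separate the clusters counted by the $2^{q_\alpha}$-segmentation until the discrepancy in the counts is forced.

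The step I expect to be the main obstacle is condition~\ref{gameSub} together with the bookkeeping of free occurrences. On the Duplicator's side, the hard part will be showing that matching an interior position while preserving both the local window $\alpha$ and the segment structure never creates an inconsistency over the remaining $r-1$ rounds. On the Spoiler's side, the delicate point is that the additive budget $\sigma+q_\alpha$ is tight: identifying a window of length $2^{q_\alpha}-1$ really costs $q_\alpha$ rounds, exhausting the scattering really costs about $\sigma$ rounds, and the two phases must compose without interfering. All of this rests on the doubling lemma---that pinning a distance $d$ exactly takes on the order of $\log_2 d$ rounds---and the subtle part is propagating it simultaneously across the endpoints (conditions~\ref{gameLength} and~\ref{gameBorder}) and the interior occurrence counts (condition~\ref{gameSub}).
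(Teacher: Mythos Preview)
The paper does not prove this theorem: it is quoted verbatim from \cite{montanari05} and used as a black box. There is therefore no ``paper's own proof'' to compare against; every subsequent result in the paper (the construction of distinguishability formulas, Lemmas~\ref{setConsis}--\ref{hintikkaSample}, and the algorithm's correctness) simply invokes Theorem~\ref{EFGamesOnStrings} as an established fact.

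That said, your sketch is broadly the right shape for how such a characterization is actually proved in the source, and the two-regime Duplicator strategy (copy an exact offset near a landmark, match a centered $\alpha$-window deep inside a gap) together with the contrapositive Spoiler arguments is the standard back-and-forth template for successor structures. One point to be careful about if you flesh it out: in the deep-gap case you need not only that a free occurrence of the same $\alpha$ exists on the other side, but that you can choose it so that \emph{both} newly created subgaps inherit the full invariant for $r-1$ rounds, including condition~\ref{gameSub} for every shorter $\alpha'$; this is where the scattering count $\sigma$ rather than the raw multiplicity $\gamma$ does real work, and your proposal gestures at this but does not yet pin down why the segment structure is the right bookkeeping device. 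Similarly, on the Spoiler side the case split between $\sigma(\alpha,u)\neq\sigma(\alpha,v)$ and $\gamma(\alpha,u)\neq\gamma(\alpha,v)$ with $\sigma$ equal requires separate arguments, and the latter is the more delicate of the two.
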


\noindent Besides the importance of EF games on strings to our framework, we also use the above result to define the distinguishability formulas. These formulas are defined based on the conditions characterizing a winning strategy for the Spoiler on successor string structures. In \cite{montanari05}, this result is also used to define a notion of similarity between successor string structures using Ehrenfeucht–Fraïssé games. The EF-similarity between strings $u$ and $v$, written $EFsim(u, v)$, is the minimum number of rounds $r$ such that the Spoiler has a winning strategy in the game $\mathcal{G}_r(u, v)$. Then, the EF-similarity between two strings can be computed in polynomial time in the size of the strings in the following way.
$$\begin{array}{lllll}
EFsim(u, v) := \begin{array}{ll}
min\{ simLength(u,v), simPref(u, v), simSuff(u, v), simSub(u,v) \},\textrm{ such that}
\end{array}\\
simLength(u, v) = \lceil log_2(min(|u|, |v|) - 1) \rceil,\\
simPref(u, v) = min\{\lceil log_2(k) \rceil \mid pref_k(u) \neq pref_k(v)\},\\
simSuff(u,v) = min\{\lceil log_2(k) \rceil \mid suff_k(u) \neq suff_k(v)\},\\
simSub(u,v) = min\{q_{\alpha} + min(\sigma(\alpha, u), \sigma(\alpha, v)) \mid \gamma(\alpha, u) \neq \gamma(\alpha, v) \textrm{ or } \sigma(\alpha, u) \neq \sigma(\alpha, v) \}.
\end{array}$$

\noindent Given two strings $u$ and $v$, $EFsim(u, v)$ can be computed in $O((|u|+|v|)^2log(|u|+|v|))$, that is, it can be computed in polynomial time \cite{montanari05}. Our algorithm's first step is to compute the sufficient quantifier rank to distinguish between any two strings $u \in P$ and $v \in N$. Then, the fact that $EFsim(u, v)$ can be computed in polynomial time is important to show that our algorithm runs in polynomial time as well.

It is easy to build a first-order sentence consisting of a disjunction of Hintikka formulas of minimal quantifier rank that is consistent with a given sample. For example, let $P = \{u_1\}$, $N = \{v_1, v_2\}$, $r = max\{EFsim(u_1, v_1), EFsim(u_1, v_2)\}$, and $S = (P, N)$. The sentence $\varphi^{r}_{u_1}$ is a first-order sentence of minimal quantifier rank that is consistent with $S$. Unfortunately, the size of $\varphi^{r}_{u_1}$ is exponential in the size of $S$. Therefore, $\varphi^{r}_{u_1}$ can not be built in polynomial time in the size of the sample. This motivates the introduction of distinguishability formulas in Section~\ref{sec3}.

\section{Distinguishability Formulas}\label{sec3}

In this section, we define distinguishability formulas for strings $u$, $v$ and a natural number $r$. Distinguishability formulas are formulas that hold on $u$, do not hold on $v$ and they have quantifier rank at most $r$. The first step is to show that the conditions of Theorem~\ref{EFGamesOnStrings} can be expressed by first-order formulas. These formulas are defined recursively in order to reduce the quantifier rank. The recursive definitions can all be simplified to direct definitions with higher quantifier ranks but, in this case, we can not guarantee that the quantifier rank is adequate. These formulas are also important to help the explanation, and they improve readability of sentences returned by our algorithm.

First, we introduce $\varphi^{d(t_1, t_2)}_{\leq n}$. It describes that the distance between terms $t_1$ and $t_2$ is at most $n$. This can be used to represent condition~\ref{gameLength} of Theorem~\ref{EFGamesOnStrings}.
$$\varphi^{d(t_1, t_2)}_{\leq n}:=\left\{\begin{array}{ll}
t_1 = t_2 \vee S(t_1, t_2),&\textrm{ if }n = 1\\
\exists y (\varphi^{d(t_1, y)}_{\leq \lfloor \frac{n}{2} \rfloor} \wedge \varphi^{d(y, t_2)}_{\leq \lceil \frac{n}{2} \rceil}),&\textrm{ otherwise.}
\end{array}\right.$$

\noindent We also set $\varphi^{d(t_1, t_2)}_{> n} := \neg \varphi^{d(t_1, t_2)}_{\leq n}$, $\varphi^{d(t_1, t_2)}_{\geq n} := \varphi^{d(t_1, t_2)}_{> n - 1}$, $\varphi^{d(t_1, t_2)}_{< n} := \neg \varphi^{d(t_1, t_2)}_{\geq n}$, and $\varphi^{d(t_1, t_2)}_{= n} := \varphi^{d(t_1, t_2)}_{\leq n} \wedge \varphi^{d(t_1, t_2)}_{\geq n}$. Clearly, $qr(\varphi^{d(t_1, t_2)}_{\vartriangleright n}) = \lceil log_2(n) \rceil$ and $w \models \varphi^{d(min, max)}_{\vartriangleright n} \textrm{ iff } |w| \vartriangleright n + 1$ for $\vartriangleright$ $\in \{<, >, \leq, \geq, =\}$. Besides, the size of $\varphi^{d(t_1, t_2)}_{\vartriangleright n}$ is $O(n)$. For example, for $r = 3$ and strings $u$ and $v$ such that $|u| = 9$ and $|v| = 12$, we have that $u \models \varphi^{d(min, max)}_{\leq 8}$, $v \not\models \varphi^{d(min, max)}_{\leq 8}$, and $qr(\varphi^{d(min, max)}_{\leq 8}) = 3$. Then, $d(min^u, max^u) \neq d(min^v, max^v)$ and $d(min^u, max^u) \leq 2^r$. Therefore, the Spoiler has a winning strategy for $\mathcal{G}_3(u, v)$.

Now, we turn to the cases in which substrings are important.
These cases are conditions~\ref{gameBorder} and \ref{gameSub} from Theorem~\ref{EFGamesOnStrings}. Formulas $\varphi_{t_1a_1...a_kt_2}$ hold in a string $w$ when the string between $t_1$ and $t_2$ is $a_1...a_k$. Formulas $\varphi_{ta_1...a_k}$ and $\varphi_{a_1...a_kt}$ express that a string $a_1...a_k$ occurs immediately on the right and immediately on the left of a term $t$, respectively.

$$\varphi_{t_1a_1...a_kt_2}:=\left\{\begin{array}{lll}
\exists z (P_{a_1}(z) \wedge S(t_1, z) \wedge S(z, t_2)),&\textrm{ if }k = 1\\
\exists z (P_{a_1}(z) \wedge S(t_1, z) \wedge \varphi_{za_2t_2}),&\textrm{ if }k = 2\\
\exists z (P_{a_{\lceil \frac{k}{2} \rceil}}(z) \wedge \varphi_{t_1a_1...a_{\lceil \frac{k}{2} \rceil - 1}z} \wedge \varphi_{za_{\lceil \frac{k}{2} \rceil + 1}...a_kt_2}),&\textrm{ otherwise.}
\end{array}\right.$$

$$\varphi_{ta_1...a_k}:=\left\{\begin{array}{lll}
\exists y (S(t, y) \wedge P_{a_1}(y)),&\textrm{ if }k = 1\\
\exists y (P_{a_{1}}(y) \wedge S(t, y) \wedge \varphi_{ya_{2}}),&\textrm{ if }k = 2\\
\exists y (P_{a_{\lceil \frac{k}{2} \rceil}}(y) \wedge \varphi_{ta_1...a_{\lceil \frac{k}{2} \rceil - 1}y} \wedge \varphi_{ya_{\lceil \frac{k}{2} \rceil + 1}...a_k}),&\textrm{ otherwise.}
\end{array}\right.$$

$$\varphi_{a_1...a_kt}:=\left\{\begin{array}{lll}
\exists y (S(y, t) \wedge P_{a_1}(y)),&\textrm{ if }k = 1\\
\exists y (P_{a_{1}}(y) \wedge \varphi_{ya_{2}t}),&\textrm{ if }k = 2\\
\exists y (P_{a_{\lceil \frac{k}{2} \rceil}}(y) \wedge \varphi_{a_1...a_{\lceil \frac{k}{2} \rceil - 1}y} \wedge \varphi_{ya_{\lceil \frac{k}{2} \rceil + 1}...a_kt}),&\textrm{ otherwise.}
\end{array}\right.$$

\noindent With respect to the quantifier rank, we have $qr(\varphi_{t_1a_1...a_kt_2}) = qr(\varphi_{ta_1...a_k}) = qr(\varphi_{a_1...a_kt}) = \lceil log_2(k + 1) \rceil$. Furthermore, the size of these formulas is $O(k)$. Now, we define sentences to handle the prefix and suffix of strings. These sentences express that the prefix of length $k$ is $a_1...a_k$ and the suffix of length $k$ is $a_1...a_k$, respectively.

$$\varphi_{pref_k = a_1...a_k}:=\left\{\begin{array}{llll}
P_{a_1}(min),&\textrm{ if }k = 1\\
P_{a_1}(min) \wedge \varphi_{mina_2...a_k},&\textrm{ if }k = 2\textrm{ or }k = 3\\
P_{a_1}(min) \wedge \exists x (P_{a_{\lceil \frac{k+1}{2} \rceil}}(x) \wedge \varphi_{mina_2...a_{\lceil \frac{k+1}{2} \rceil - 1}x} \wedge \varphi_{xa_{\lceil \frac{k+1}{2} \rceil + 1}...a_{k}}),&\textrm{ otherwise.}
\end{array}\right.$$

$$\varphi_{suff_k = a_1...a_k}:=\left\{\begin{array}{llll}
P_{a_k}(max),&\textrm{ if }k = 1\\
P_{a_k}(max) \wedge \varphi_{a_1...a_{k-1}max},&\textrm{ if }k = 2\textrm{ or }k = 3\\
P_{a_k}(max) \wedge \exists x (P_{a_{\lfloor \frac{k}{2} \rfloor}}(x) \wedge \varphi_{a_1...a_{\lfloor \frac{k}{2} \rfloor - 1}x} \wedge \varphi_{xa_{\lfloor \frac{k}{2} \rfloor + 1}...a_{k-1}max}),&\textrm{ otherwise.}
\end{array}\right.$$

\noindent We also set abbreviations $\varphi_{pref_k \neq a_1...a_k} := \neg \varphi_{pref_k = a_1...a_k}$ and $\varphi_{suff_k \neq a_1...a_k} := \neg \varphi_{suff_k = a_1...a_k}$. Therefore, $qr(\varphi_{pref_k \blacktriangleright a_1...a_k}) = \lceil log_2(k) \rceil$ and $w \models \varphi_{pref_k \blacktriangleright a_1...a_k}$ iff $pref_k(w) \blacktriangleright a_1...a_k$, where $\blacktriangleright$ $\in \{=, \neq\}$. Analogously for $\varphi_{suff_k \blacktriangleright a_1...a_k}$. Also, the size of $\varphi_{pref_k = a_1...a_k}$ and $\varphi_{suff_k = a_1...a_k}$ is $O(k)$. We use these formulas to express condition~\ref{gameBorder} of Theorem~\ref{EFGamesOnStrings}. To see why, Let $r = 2$, $u = bbbaabbbb$ and $v = bbbbabbbb$. Thus, $u \models \varphi_{pref_4 \neq bbbb}$ and $v \not\models \varphi_{pref_4 \neq bbbb}$. Then, $pref_{2^r}(u) \neq pref_{2^r}(v)$ and, from condition~\ref{gameBorder} of Theorem~\ref{EFGamesOnStrings}, it follows that the Spoiler has a winning strategy in $\mathcal{G}_2(u, v)$.

Now, we need sentences regarding free multiplicity and free scattering. Let $\alpha = a_1...a_k$ be a string such that each $a_i \in \Sigma$, and $k = 2^{q_{\alpha}} - 1$ for $q_{\alpha} > 0$ as in condition~\ref{gameSub} from Theorem~\ref{EFGamesOnStrings}. Now, we set the formula $\varphi_{a_1...a_k}(x)$ describing that a string $a_1...a_k$ occurs centered on position $x$. Then, we give an example of a formula $\varphi_{\alpha}(x)$.

$$\varphi_{a_1...a_k}(x):=\left\{\begin{array}{ll}
P_{a_1}(x),&\textrm{ if }k = 1\\
P_{a_{\lceil \frac{k}{2} \rceil}}(x) \wedge \varphi_{a_1...a_{\lceil \frac{k}{2} \rceil - 1}x} \wedge \varphi_{xa_{\lceil \frac{k}{2} \rceil + 1}...a_k},&\textrm{ if }k \geq 3.
\end{array}\right.$$

\begin{exam}\label{examAlphaCentred}
	Let $\alpha = abc$. Then,
	$$\varphi_{\alpha}(x) = \begin{array}{ll}
	P_{b}(x) \wedge \exists y_1 (S(y_1, x) \wedge P_{a}(y_1)) \wedge \exists y_1 (P_{c}(y_1) \wedge S(x, y_1)).
	\end{array}$$
\end{exam}

\noindent Note that $qr(\varphi_{\alpha}(x)) = q_{\alpha} - 1$ and the size of $\varphi_{\alpha}(x)$ is $O(k)$. Now, we can use formulas $\varphi_{\alpha}(x)$ to define $\varphi_{\gamma(\alpha) \geq n}$ expressing that $\alpha$ has at least $n$ free occurrences. Then, we need to use $n$ pairwise different variables and each variable must be in a proper distance from $min$ and $max$.

$$\varphi_{\gamma(\alpha) \geq n} := \begin{array}{ll}
\exists x_1 \exists x_2 ... \exists x_n ( \bigwedge_{1 \leq i < j \leq n} x_i \neq x_j \wedge \bigwedge^{n}_{i = 1} \varphi_{\alpha}(x_i) \wedge \bigwedge^{n}_{i = 1} (\varphi^{d(min, x_i)}_{> 2^{q_{\alpha}-1}} \wedge \varphi^{d(x_i, max)}_{> 2^{q_{\alpha}-1}})).
\end{array}$$

\noindent Now, we need to deal with formulas $\varphi_{\sigma(\alpha) \geq n}$ expressing that the scattering of $\alpha$ is at least $n$. First, in the following, we set an auxiliary formula in order to make the presentation simpler. The formula below indicates that $\alpha$ occurs centered on a position on the left of $x$ and at least $2^{q_{\alpha} - 1}$ distant from $x$. This formula is important in ensuring a proper distance from other occurrences of $\alpha$, that is, greater than $2^{q_{\alpha}}$. Furthermore, the distance between $\alpha$ and $min$ or $max$ must be greater than $2^{q_{\alpha} - 1}$ in order to $\alpha$ occur free.

$$\varphi^{d(x) \geq 2^{q_{\alpha}-1}}_{\alpha} := \exists y (\varphi^{d(y, x)}_{\geq 2^{q_{\alpha}-1}} \wedge \varphi_{\alpha}(y) \wedge \varphi^{d(min, y)}_{> 2^{q_{\alpha}-1}} \wedge \varphi^{d(y, max)}_{> 2^{q_{\alpha}-1}}).$$

\noindent With respect to the quantifier rank, we have $qr(\varphi^{d(x) \geq 2^{q_{\alpha}-1}}_{\alpha}) = q_{\alpha}$. Now, we can define the sentence $\varphi_{\sigma(\alpha) \geq n}$. After that, we give an example of $\varphi_{\gamma(\alpha) \geq n}$ and $\varphi_{\sigma(\alpha) \geq n}$.

$$\varphi_{\sigma(\alpha) \geq n} := \begin{array}{ll}
\exists x_1 (\varphi^{d(x_1) \geq 2^{q_{\alpha}-1}}_{\alpha} \wedge \exists x_2 (\varphi^{d(x_1, x_2)}_{> 2^{q_{\alpha}}} \wedge \varphi^{d(x_2) \geq 2^{q_{\alpha}-1}}_{\alpha} \wedge ... \wedge
\exists x_{n-1}\\ (\varphi^{d(x_{n-2}, x_{n-1})}_{> 2^{q_{\alpha}}} \wedge \varphi^{d(x_{n-1}) \geq 2^{q_{\alpha}-1}}_{\alpha} \wedge \exists x_n (\varphi^{d(x_{n-1}, x_n)}_{> 2^{q_{\alpha}-1}} \wedge \varphi_{\alpha}(x_n) ))...)).
\end{array}$$

\begin{exam}
	Let $\alpha = abc$ and $n = 2$. Thus,
	$$\varphi_{\gamma(\alpha) \geq n} = \begin{array}{ll}
	\exists x_1 \exists x_2 (x_1 \neq x_2 \wedge \varphi_{abc}(x_1) \wedge \varphi_{abc}(x_2) \wedge \varphi^{d(min, x_1)}_{> 2^{q_{\alpha}-1}} \wedge \varphi^{d(x_1, max)}_{> 2^{q_{\alpha}-1}} \wedge \varphi^{d(min, x_2)}_{> 2^{q_{\alpha}-1}} \wedge \varphi^{d(x_2, max)}_{> 2^{q_{\alpha}-1}} ).
	\end{array}$$
	
	$$\varphi_{\sigma(\alpha) \geq n} = \begin{array}{ll}
	\exists x_1 (\varphi^{d(x_1) \geq 2^{q_{\alpha}-1}}_{\alpha} \wedge \exists x_2 (\varphi^{d(x_{1}, x_2)}_{> 2^{q_{\alpha}-1}} \wedge \varphi_{\alpha}(x_2) )).
	\end{array}$$
\end{exam}

\noindent We also define the following abbreviations $\varphi_{\gamma(\alpha) < n} := \neg \varphi_{\gamma(\alpha) \geq n}$ and $\varphi_{\gamma(\alpha) = n} := \varphi_{\gamma(\alpha) \geq n} \wedge \varphi_{\gamma(\alpha) < n + 1}$. Then, $qr(\varphi_{\gamma(\alpha) \trianglelefteq n}) = q_{\alpha} + n - 1$ and $w \models \varphi_{\gamma(\alpha) \trianglelefteq n}$ iff $\gamma(\alpha, w) \trianglelefteq n$ for $\trianglelefteq$ $\in \{\geq, <, =\}$. It is analogous to $\varphi_{\sigma(\alpha) \trianglelefteq n}$. Besides, the size of $\varphi_{\gamma(\alpha) \trianglelefteq n}$ and $\varphi_{\sigma(\alpha) \trianglelefteq n}$ is $O((n + |\alpha|)^2)$.

Now, we can define the distinguishability formulas. Distinguishability formulas are defined from a pair of strings $u$, $v$ and a quantifier rank $r$. These formulas have quantifier rank at most $r$, and they hold in $u$ and do not hold in $v$. In what follows, $\alpha$ is a substring of $u$ or $v$.

\begin{defi}[Distinguishability Formulas]
	Let $u$, $v$ be strings over some alphabet $\Sigma$ and $r$ be a natural number. The set of distinguishability formulas from $u$, $v$ and $r$ is
	$$\Phi^r_{u, v}:= \Phi_{u, v}^{r, length} \cup \Phi_{u, v}^{r, pref} \cup \Phi_{u, v}^{r, suff} \cup \Phi_{u, v}^{r, sub}.$$ 
	
	where
	$$\Phi_{u, v}^{r, length}:=\begin{array}{ll}
	\{ \varphi^{d(min, max)}_{\leq n} \mid |u| < |v|, |u| - 1 \leq n \leq min(2^r, |v| - 2) \} \cup\\
	\{ \varphi^{d(min, max)}_{\geq n} \mid |u| > |v|, |v| \leq n \leq min(2^r + 1, |u| - 1) \}
	\end{array}$$
	$$\Phi_{u, v}^{r,pref} := \begin{array}{ll}
	\{ \varphi_{pref_{k} = pref_{k}(u)} \mid pref_{k}(u) \neq pref_{k}(v),  k \leq min(2^r, |u|, |v|)\} \cup\\
	\{ \varphi_{pref_{k} \neq pref_{k}(v)} \mid pref_{k}(u) \neq pref_{k}(v), k \leq min(2^r, |u|, |v|) \}
	\end{array}$$
	$$\Phi_{u, v}^{r,suff} := \begin{array}{ll}
	\{ \varphi_{suff_{k} = suff_{k}(u)} \mid suff_{k}(u) \neq suff_{k}(v),  k \leq min(2^r, |u|, |v|)\} \cup\\
	\{ \varphi_{suff_{k} \neq suff_{k}(v)} \mid suff_{k}(u) \neq suff_{k}(v), k \leq min(2^r, |u|, |v|) \}
	\end{array}$$	
	$$\Phi_{u, v}^{r, sub}:=\begin{array}{llll}
	\{ \varphi_{\sigma(\alpha) \geq n} \mid \sigma(\alpha, u) > \sigma(\alpha, v), \sigma(\alpha, v) < n \leq min(r - q_{\alpha} + 1, \sigma(\alpha, u)) \} \cup\\
	\{ \varphi_{\sigma(\alpha) < n} \mid \sigma(\alpha, u) < \sigma(\alpha, v), \sigma(\alpha, u) < n \leq min(r - q_{\alpha} + 1, \sigma(\alpha, v)) \} \cup\\
	\{ \varphi_{\gamma(\alpha) \geq n} \mid \gamma(\alpha, u) > \gamma(\alpha, v), \gamma(\alpha, v) < n \leq min(r - q_{\alpha} + 1, \gamma(\alpha, u)) \} \cup\\
	\{ \varphi_{\gamma(\alpha) < n } \mid \gamma(\alpha, u) < \gamma(\alpha, v), \gamma(\alpha, u) < n \leq min(r - q_{\alpha} + 1, \gamma(\alpha, v)) \}.
	\end{array}$$
\end{defi}

\noindent Observe that, given $u$, $v$, and $r$, the size of a formula $\varphi \in \Phi^r_{u, v}$ is $O((|u| + |v|)^2)$, and the number of elements $|\Phi^r_{u,v}|$ in $\Phi^r_{u,v}$ is $O((|u| + |v|)^3)$. This is crucial in order to guarantee that our algorithm runs in polynomial time in the size of the sample. Also, by the definition of distinguishability formulas and Theorem~\ref{teoEFHintikka}, it follows that the Spoiler has a winning strategy in $\mathcal{G}_r(u, v)$ if and only if there exists $\varphi \in \Phi^r_{u, v}$. In what follows, we give examples of distinguishability formulas.

\begin{exam}
	Let $u = aaacbbb$, $v = aaabbbbb$, and $r = 2$. Therefore, $\varphi_{pref_4 = aaac}$, $\varphi_{pref_4 \neq aaab} \in \Phi^r_{u, v}$. Furthermore, $\varphi_{pref_3 = aaa}, \varphi_{pref_3 \neq aaa} \not\in \Phi^r_{u,v}$ because $pref_3(u) = pref_3(v)$. Also, $\varphi_{\gamma(c) \geq 1} \in \Phi^r_{u, v}$ because $\gamma(c, u) > \gamma(c, v)$ and $\gamma(c, v) < 1 \leq min(2,1)$. Besides, $\varphi_{\gamma(bbb) < 1} \in \Phi^r_{u, v}$ because $\gamma(bbb, u) < \gamma(bbb, v)$ and $\gamma(bbb, u) < 1 \leq min(1,1)$. With respect to the length, $\Phi^{r, length}_{u,v} = \emptyset$ because $n > min(2^r, 6)$.
\end{exam}

\begin{exam}
	Now, let $u = bbaaaaaaaabb$, $v = bbaaaaaabb$, and $r = 4$. Then, $\varphi_{sigma(aaa) \geq 2} \in \Phi^r_{u, v}$ as $\sigma(aaa,u) = 2$, $\sigma(aaa,v) = 1$, and $\sigma(aaa,v) < n \leq min(3,2)$. Besides, $\varphi^{d(min, max)}_{\geq 10} \in \Phi^r_{u, v}$ because $|u| > |v|$ and $9 \leq 10 \leq min(16,11)$.
\end{exam}

\noindent Now, we show results ensuring adequate properties of distinguishability formulas.

\begin{restatable}{lem}{setConsis}\label{setConsis}
	Let $u, v$ be strings and $r$ be a natural number. Let $\varphi \in \Phi_{u, v}^{r}$. Then, $u \models \varphi$ and $v \not\models \varphi$.
\end{restatable}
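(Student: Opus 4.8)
The plan is to prove the lemma by a direct case analysis on which of the four constituent sets the formula $\varphi$ belongs to, namely $\Phi_{u,v}^{r,length}$, $\Phi_{u,v}^{r,pref}$, $\Phi_{u,v}^{r,suff}$, or $\Phi_{u,v}^{r,sub}$, and within each set on the particular syntactic form of $\varphi$. For every such form the paper has already established a semantic characterization of when an arbitrary string $w$ satisfies it: namely $w \models \varphi^{d(min,max)}_{\vartriangleright n}$ iff $|w| \vartriangleright n+1$, $w \models \varphi_{pref_k \blacktriangleright a_1\dots a_k}$ iff $pref_k(w) \blacktriangleright a_1\dots a_k$ (and analogously for suffixes), and $w \models \varphi_{\gamma(\alpha)\trianglelefteq n}$ iff $\gamma(\alpha,w)\trianglelefteq n$ (and analogously for $\sigma$). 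I would assume these characterizations and reduce the entire lemma to checking that, in each case, the inequalities appearing in the membership condition that put $\varphi$ into $\Phi^r_{u,v}$ force the two instances of the characterization to evaluate to \emph{true} for $u$ and to \emph{false} for $v$.

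I would begin with the length case, since it best illustrates the one genuinely error-prone point, the $+1$ shift between the subscript $n$ and the actual length. For a formula $\varphi^{d(min,max)}_{\le n}$ we have $|u| < |v|$ and $|u|-1 \le n \le \min(2^r,|v|-2)$; then $n+1 \ge |u|$ gives $|u|\le n+1$, so $u\models\varphi$, while $n+1 \le |v|-1 < |v|$ gives $|v| > n+1$, so $v\not\models\varphi$. The companion form $\varphi^{d(min,max)}_{\ge n}$, governed by $|u|>|v|$ and $|v|\le n \le \min(2^r+1,|u|-1)$, is symmetric: $n \le |u|-1$ yields $|u|\ge n+1$ and $n \ge |v|$ yields $|v|\le n$, i.e. $|v| < n+1$. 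In both forms the verification is nothing more than rearranging the defining inequalities once the characterization is in hand.

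For the prefix and suffix cases the argument is even more transparent. When $\varphi$ is $\varphi_{pref_k = pref_k(u)}$ the condition guarantees $pref_k(u)\neq pref_k(v)$, so $u\models\varphi$ trivially (equality with its own prefix) and $v\not\models\varphi$ because $pref_k(v)\neq pref_k(u)$; when $\varphi$ is $\varphi_{pref_k \neq pref_k(v)}$ the roles are swapped, with $u\models\varphi$ by $pref_k(u)\neq pref_k(v)$ and $v\not\models\varphi$ since $pref_k(v)$ cannot differ from itself. The only nonroutine observation is that the defining condition insists $k \le \min(2^r,|u|,|v|)$, which is exactly what is needed for $pref_k(u)$ and $pref_k(v)$ to be well defined so that the characterization applies; I would flag this explicitly. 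The suffix case is identical mutatis mutandis. Finally, for $\Phi_{u,v}^{r,sub}$ each of the four forms ($\varphi_{\sigma(\alpha)\ge n}$, $\varphi_{\sigma(\alpha)<n}$, $\varphi_{\gamma(\alpha)\ge n}$, $\varphi_{\gamma(\alpha)<n}$) is handled the same way: the bounds $\sigma(\alpha,v) < n \le \sigma(\alpha,u)$ (respectively the analogous bounds for the other three) immediately give $u\models\varphi$ and $v\not\models\varphi$ through the free-scattering and free-multiplicity characterizations, with $\alpha$ a legitimate pattern of length $2^{q_\alpha}-1$ by construction.

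I do not expect any conceptual obstacle; the proof is essentially bookkeeping once the earlier semantic equivalences are taken as given. The one place that demands care, and where I would slow down, is the systematic tracking of the off-by-one offsets, above all the $|w|\,\vartriangleright\, n+1$ shift in the length formulas and the boundary value $n = \sigma(\alpha,v)$ versus $n = \sigma(\alpha,v)+1$ in the strict-versus-nonstrict comparisons of the sub case; a single misread inequality there would break either the $u\models\varphi$ or the $v\not\models\varphi$ half. Organizing the write-up as a short table of (form of $\varphi$, defining condition, resulting satisfaction at $u$, resulting satisfaction at $v$) would keep these offsets straight and make the completeness of the case split self-evident.
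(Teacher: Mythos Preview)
Your proposal is correct and follows essentially the same approach as the paper's own proof: a direct case analysis on the syntactic form of $\varphi$, invoking the previously established semantic characterizations and reading off $u\models\varphi$ and $v\not\models\varphi$ from the defining inequalities. The paper's write-up is terser (it treats one representative form in each group and declares the rest similar), but the content is the same.
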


\begin{proof}
	First, suppose $\varphi = \varphi^{d(min, max)}_{\leq n}$. Then, $|u| < |v|$ and $|u| - 1 \leq n \leq min(2^r, |v| - 2)$. As $|u| - 1 \leq n$, then $d(min^u, max^u) \leq n$. Therefore, $u \models \varphi$. Clearly, $n \leq |v| - 2$ because $n \leq min(2^r, |v| - 2)$. Then, $n \leq d(min^v, max^v) - 1$. Therefore, $v \not\models \varphi$. The case in which $\varphi = \varphi^{d(min, max)}_{\geq n}$ is similar.
	
	Now, let $\varphi = \varphi_{pref_k = pref_k(u)}$. Then, $pref_k(u) \neq pref_k(v)$. It also holds that $k \leq min(2^r, |u|, |v|)$. As $k \leq |u|$, $k \leq |v|$, and $pref_k(u) \neq pref_k(v)$, then $u \models \varphi$ and $v \not\models \varphi$. The cases in which $\varphi = \varphi_{pref_k \neq pref_k(v)}$, $\varphi = \varphi_{suff_k = suff_k(v)}$, and $\varphi = \varphi_{suff_k \neq suff_k(v)}$ are similar.
	
	Next, let $\varphi = \varphi_{\gamma(\alpha) \geq n}$. Thus, $\gamma(\alpha, v) < \gamma(\alpha, u)$ and $\gamma(\alpha, v) < n \leq min(r - q_{\alpha} + 1, \gamma(\alpha, u))$. Therefore, $\gamma(\alpha, v) < n \leq \gamma(\alpha, u)$. Then, $u \models \varphi$ and $v \not\models \varphi$. The cases in which $\varphi = \varphi_{\gamma(\alpha) < n}$, $\varphi = \varphi_{\sigma(\alpha) \geq n}$, and $\varphi = \varphi_{\sigma(\alpha) < n}$ are analogous.
\end{proof}

\begin{restatable}{lem}{setQR}
	Let $u, v$ be strings and $r$ be a natural number. Let $\varphi \in \Phi_{u, v}^{r}$. Then, $EFsim(u, v) \leq qr(\varphi) \leq r$.
\end{restatable}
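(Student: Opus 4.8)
The plan is to argue by cases according to which of the four subsets of $\Phi^r_{u,v}$ contains $\varphi$, treating the two inequalities $qr(\varphi)\le r$ and $EFsim(u,v)\le qr(\varphi)$ separately. The bound $qr(\varphi)\le r$ is the routine direction: for each family the quantifier rank was already recorded in Section~\ref{sec3}, so one only has to feed in the range constraint attached to the corresponding index in the definition of $\Phi^r_{u,v}$. Concretely, for $\varphi^{d(min,max)}_{\le n}\in\Phi^{r,length}_{u,v}$ we have $qr(\varphi)=\lceil\log_2 n\rceil$ and $n\le 2^r$, so $qr(\varphi)\le r$; the $\ge n$ branch is analogous once one notes $qr(\varphi^{d}_{\ge n})=\lceil\log_2(n-1)\rceil$ and $n\le 2^r+1$, hence $n-1\le 2^r$. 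For the prefix and suffix formulas $qr(\varphi)=\lceil\log_2 k\rceil$ with $k\le 2^r$, giving $qr(\varphi)\le r$. For the four substring formulas $qr(\varphi)=q_\alpha+n-1$ and the index obeys $n\le r-q_\alpha+1$, so again $q_\alpha+n-1\le r$.

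For the lower bound $EFsim(u,v)\le qr(\varphi)$ I would exploit that $EFsim(u,v)$ is the minimum of $simLength$, $simPref$, $simSuff$ and $simSub$; it therefore suffices, in each case, to show that $qr(\varphi)$ dominates the single sim-term matching the family of $\varphi$. If $\varphi\in\Phi^{r,length}_{u,v}$, then for the $\le n$ branch $|u|<|v|$ and $|u|-1\le n$, so $qr(\varphi)=\lceil\log_2 n\rceil\ge\lceil\log_2(|u|-1)\rceil=simLength(u,v)$, whence $EFsim(u,v)\le simLength(u,v)\le qr(\varphi)$ (the $\ge n$ branch is analogous). If $\varphi=\varphi_{pref_k=pref_k(u)}$ or one of its negated and suffix companions, then $pref_k(u)\ne pref_k(v)$, so $k$ is one of the indices over which $simPref$ takes its minimum; hence $simPref(u,v)\le\lceil\log_2 k\rceil=qr(\varphi)$ and $EFsim(u,v)\le qr(\varphi)$.

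The substring families are where the argument needs care, and I expect the $\gamma$-branches to be the main obstacle. For $\varphi_{\sigma(\alpha)\ge n}$ we have $\sigma(\alpha,u)\ne\sigma(\alpha,v)$, so $\alpha$ is admissible in the minimum defining $simSub$, giving $simSub(u,v)\le q_\alpha+\min(\sigma(\alpha,u),\sigma(\alpha,v))\le q_\alpha+\sigma(\alpha,v)$; with $\sigma(\alpha,v)<n$, i.e. $\sigma(\alpha,v)\le n-1$, this yields $simSub(u,v)\le q_\alpha+n-1=qr(\varphi)$, and the $\sigma(\alpha)<n$ branch is symmetric. The delicate point is that in the $\gamma$-branches the index $n$ is bounded in terms of $\gamma(\alpha,\cdot)$, whereas $simSub$ is phrased through $\sigma(\alpha,\cdot)$. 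To bridge this I would use the elementary inequality $\sigma(\alpha,w)\le\gamma(\alpha,w)$, which holds because $\sigma(\alpha,w)$ counts the segments of a segmentation of the set $\Gamma(\alpha,w)$ of free occurrences, and any such segmentation has at most $\gamma(\alpha,w)=|\Gamma(\alpha,w)|$ nonempty segments. Then for $\varphi_{\gamma(\alpha)\ge n}$ with $\gamma(\alpha,u)>\gamma(\alpha,v)$, the inequality $\gamma(\alpha,u)\ne\gamma(\alpha,v)$ again makes $\alpha$ admissible, so $simSub(u,v)\le q_\alpha+\min(\sigma(\alpha,u),\sigma(\alpha,v))\le q_\alpha+\sigma(\alpha,v)\le q_\alpha+\gamma(\alpha,v)\le q_\alpha+(n-1)=qr(\varphi)$, using $\gamma(\alpha,v)<n$ in the last step; the $\gamma(\alpha)<n$ branch is identical with the roles of $u$ and $v$ swapped, via $\sigma(\alpha,u)\le\gamma(\alpha,u)<n$. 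Collecting the cases gives $EFsim(u,v)\le qr(\varphi)\le r$ throughout.
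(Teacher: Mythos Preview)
Your argument is correct, and for the upper bound $qr(\varphi)\le r$ it is essentially the paper's own case analysis (in fact your treatment of the $\varphi^{d(min,max)}_{\ge n}$ branch is cleaner: the paper writes ``$n\le 2^r$'' for both $\le$ and $\ge$, whereas the definition actually gives $n\le 2^r+1$ in the $\ge$ case, and one needs your observation that $qr(\varphi^{d}_{\ge n})=\lceil\log_2(n-1)\rceil$). The genuine difference is in the lower bound $EFsim(u,v)\le qr(\varphi)$. The paper dispatches this in one line: by Lemma~\ref{setConsis} we have $u\models\varphi$ and $v\not\models\varphi$, and by Ehrenfeucht's Theorem this forces the Spoiler to win $\mathcal G_{qr(\varphi)}(u,v)$, hence $EFsim(u,v)\le qr(\varphi)$. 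You instead unpack the explicit formula for $EFsim$ and, case by case, bound the appropriate $sim$-term by $qr(\varphi)$; the only nontrivial step is the $\gamma$-branch, where you bridge the gap between $\gamma$ and the $\sigma$-based $simSub$ via the elementary inequality $\sigma(\alpha,w)\le\gamma(\alpha,w)$. Your route is self-contained and combinatorial, avoiding any appeal to the semantics of the formulas or to Theorem~\ref{teoEFHintikka}; the paper's route is much shorter and shows that the lower bound is really an immediate consequence of Lemma~\ref{setConsis}, so nothing about the \emph{shape} of distinguishability formulas is needed there.
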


\begin{proof}
	From Lemma~\ref{setConsis}, it follows that $EFsim(u, v) \leq qr(\varphi)$. Now, we need to show that $qr(\varphi) \leq r$.
	
	If $\varphi = \varphi^{d(min, max)}_{\trianglelefteq n}$ where $\trianglelefteq$ $\in \{ \leq, \geq \}$, then $n \leq 2^r$. Hence, $qr(\varphi) = \lceil log_2(n) \rceil$. It follows that $qr(\varphi) \leq \lceil log_2(2^r) \rceil = r$.
	
	Let $\varphi \in \{\varphi_{pref_k \blacktriangleright w}, \varphi_{suff_k \blacktriangleright w} \}$ where $\blacktriangleright$ $\in \{=, \neq\}$. Then, $k \leq 2^r$. Therefore, $qr(\varphi) = \lceil log_2(k) \rceil \leq r$.
	
	Finally, if $\varphi \in \{ \varphi_{\gamma(\alpha) \vartriangleright n}, \varphi_{\sigma(\alpha) \vartriangleright n} \}$ where $\vartriangleright$ $\in \{<, \geq\}$, then $n \leq r - q_{\alpha} + 1$. Thus, $qr(\varphi) = q_{\alpha} + n - 1 \leq r$.
\end{proof}

Distinguishability formulas consist of a boolean combination of sentences $\varphi^{d(min, max)}_{\leq n}$, $\varphi_{pref_k = a_1...a_k}$, $\varphi_{suff_k = a_1...a_k}$, $\varphi_{\gamma(\alpha) \geq n}$, and $\varphi_{\sigma(\alpha) \geq n}$. Then, the following result ensures an important property of distinguishability formulas.

\begin{restatable}{prop}{booleanCombPoly}\label{booleanCombPoly}
	Given a string $w$ and $\psi$ a boolean combination of distinguishability formulas, one can check if $w \models \psi$ in polynomial time.
\end{restatable}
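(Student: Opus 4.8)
The plan is to evaluate $\psi$ by recursion on its boolean structure, treating the distinguishability formulas occurring in it as atoms. The essential point is that we never perform naive first-order model checking on these atoms; instead we rely on the semantic characterizations already established for each type of distinguishability formula, which reduce the question $w \models \varphi$ to a simple arithmetic or combinatorial test on $w$ that is computable in polynomial time.

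First I would handle the atoms. Each distinguishability formula has one of the forms $\varphi^{d(min,max)}_{\vartriangleright n}$, $\varphi_{pref_k \blacktriangleright a_1...a_k}$, $\varphi_{suff_k \blacktriangleright a_1...a_k}$, $\varphi_{\gamma(\alpha) \trianglelefteq n}$, or $\varphi_{\sigma(\alpha) \trianglelefteq n}$, where $\vartriangleright \in \{<,>,\leq,\geq,=\}$, $\blacktriangleright \in \{=,\neq\}$, and $\trianglelefteq \in \{\geq,<,=\}$. For the length formulas we use that $w \models \varphi^{d(min,max)}_{\vartriangleright n}$ iff $|w| \vartriangleright n+1$, which is decided by a single comparison once $|w|$ is known. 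For the prefix and suffix formulas we use that $w \models \varphi_{pref_k \blacktriangleright a_1...a_k}$ iff $pref_k(w) \blacktriangleright a_1...a_k$, and analogously for suffixes, which is decided by comparing at most $k \leq |w|$ symbols. For the substring formulas we use that $w \models \varphi_{\gamma(\alpha) \trianglelefteq n}$ iff $\gamma(\alpha,w) \trianglelefteq n$ and $w \models \varphi_{\sigma(\alpha) \trianglelefteq n}$ iff $\sigma(\alpha,w) \trianglelefteq n$; here one scans $w$ to compute the set $\Gamma(\alpha,w)$ of free occurrences, then obtains $\gamma(\alpha,w) = |\Gamma(\alpha,w)|$ and computes $\sigma(\alpha,w)$ by a greedy $2^{q_{\alpha}}$-segmentation of $\Gamma(\alpha,w)$, and finally compares with $n$. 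All of these computations run in polynomial time in $|w|$; indeed, they are precisely the quantities used to compute $EFsim$, which is computable in polynomial time \cite{montanari05}.

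Then I would assemble the boolean combination. The distinct distinguishability formulas occurring in $\psi$ are at most $|\psi|$ in number, so evaluating all of them at $w$ takes polynomial time in $|w|$ and $|\psi|$. Caching their truth values, I then evaluate $\psi$ by a single pass over its syntax tree, propagating truth values through $\neg$, $\wedge$, and $\vee$ in time linear in $|\psi|$. Hence $w \models \psi$ is decided in polynomial time.

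The main obstacle to anticipate, and the reason the statement is worth isolating, is that a distinguishability formula may have quantifier rank as large as (roughly) $|w|$ and a deeply nested quantifier block; a direct evaluation that unfolds this quantifier structure would be exponential. The crux is therefore to bypass the syntax entirely and decide each atom through its semantics. The genuine work behind this, namely that each formula type indeed expresses the corresponding length, prefix, suffix, free-multiplicity, or free-scattering condition, was carried out when those formulas were introduced; here it remains only to observe that each such condition is a polynomial-time computable property of $w$ and that a boolean combination over polynomially many such atoms stays within polynomial time.
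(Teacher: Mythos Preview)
Your proposal is correct and follows essentially the same approach as the paper: both reduce each distinguishability-formula atom to its semantic characterization (a length comparison, a prefix/suffix check, or the computation of $\gamma(\alpha,w)$ or $\sigma(\alpha,w)$), observe that each such test is polynomial in $|w|$, and then note that a boolean combination of polynomially many polynomial-time checks is itself polynomial. Your write-up is a bit more explicit about the boolean-combination bookkeeping and about why naive model checking would fail, but the underlying argument is the same.
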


\begin{proof}
	We need to show that given a string $w$ and a distinguishability formula $\varphi$, we can check if $w \models \varphi$ in polynomial time. From that, it follows directly that it also holds for any boolean combination of such formulas.
	
	First, let $\varphi = \varphi^{d(min, max)}_{\leq n}$. It is possible to check in linear time in the size of $w$ whether $|w| \leq n + 1$. 
	
	Next, let $\varphi = \varphi_{pref_k = a_1...a_k}$. Clearly, it is also possible to check in linear time in the size of $w$ whether $pref_k(w) = a_1...a_k$. The case in which $\varphi = \varphi_{suff_k = a_1...a_k}$ is analogous.
	
	Now, let $\varphi = \varphi_{\gamma(\alpha) \geq n}$. At each position of $w$, it is necessary to check if $\alpha$ occurs free and centered in that position. Then, it takes polynomial time in the size of $w$.
	
	Finally, let $\varphi = \varphi_{\sigma(\alpha) \geq n}$. First, it is necessary to compute a $2^{q_{\alpha}}$-segmentation of the set of free occurrences of $\alpha$. This takes polynomial time in the size of $w$. Then, it suffices to compare the number of partitions with $n$. Thus, it takes polynomial time to check whether $w \models \varphi_{\sigma(\alpha) \geq n}$.
	
	Therefore, if $\psi$ is a boolean combination of distinguishability formulas, then it takes polynomial to check whether $w \models \psi$.
\end{proof}

Distinguishability formulas are also representative for the set of first-order sentences over successor string structures. For example, let $\varphi = \exists x (P_a(x) \wedge \forall y (x \neq y \to \neg P_a(y)))$. Also let $u = bbabb$, $v_1 = bbbbb$, $v_2 = bbabbabb$, and $r = 2$. Therefore, $\varphi_{\gamma(a) \geq 1} \in \Phi^r_{u, v_1}$ and $\varphi_{\gamma(a) < 2} \in \Phi^r_{u, v_2}$. Thus, $\varphi$ is equivalent to $\varphi_{\gamma(a) \geq 1} \wedge \varphi_{\gamma(a) < 2}$. Now we will show that this holds for any first-order sentence over strings in our setting. First, we define formulas equivalent to Hintikka formulas.

$$\varphi^{r, length}_{w}:=\left\{\begin{array}{ll}
\varphi^{d(min,max)}_{= d(min^{w}, max^{w})},&\textrm{ if }|w| \leq 2^r + 1\\
\varphi^{d(min, max)}_{> 2^r},&\textrm{ otherwise.}
\end{array}\right.$$
$$\varphi^{r, pref}_{w}:= \varphi_{pref_{2^r} = pref_{2^r}(w)}$$
$$\varphi^{r, suff}_{w}:= \varphi_{suff_{2^r} = suff_{2^r}(w)}$$
$$\varphi^{r, \alpha}_{w}:=\left\{\begin{array}{ll}
\varphi_{\sigma(\alpha) = \sigma(\alpha, w)} \wedge \varphi_{\gamma(\alpha) = \gamma(\alpha, w)},&\textrm{ if }q_{\alpha} + \sigma(\alpha, w) \leq r\\
\varphi_{\sigma(\alpha) \geq r - q_{\alpha} + 1},&\textrm{ otherwise.}
\end{array}\right.$$
$$\varphi^{r, sub}_{w}:= \bigwedge \{\varphi^{r, \alpha}_{w} \mid |\alpha| = 2^q - 1, q > 0 \}.$$

\begin{restatable}{lem}{hintikkaStrings}\label{hintikkaStrings}
	$\models \varphi^{r}_{w} \leftrightarrow (\varphi^{r, length}_{w} \wedge \varphi^{r, pref}_{w} \wedge \varphi^{r, suff}_{w} \wedge \varphi^{r, sub}_{w})$.
\end{restatable}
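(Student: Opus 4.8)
The plan is to establish the validity pointwise: since $\models \varphi^{r}_{w} \leftrightarrow (\varphi^{r, length}_{w} \wedge \varphi^{r, pref}_{w} \wedge \varphi^{r, suff}_{w} \wedge \varphi^{r, sub}_{w})$ asserts that the biconditional holds in every successor string structure, it suffices to fix an arbitrary string $v$ and run a chain of equivalences showing that $v \models \varphi^r_w$ iff $v$ satisfies the four-way conjunction. First I would invoke Ehrenfeucht's Theorem (Theorem~\ref{teoEFHintikka}) to replace $v \models \varphi^r_w$ by ``the Duplicator has a winning strategy in $\mathcal{G}_r(w,v)$'', and then Theorem~\ref{EFGamesOnStrings} (taking $u = w$) to replace the latter by the conjunction of the three combinatorial conditions~\ref{gameLength}, \ref{gameBorder}, and \ref{gameSub}. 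The task then reduces to matching each conjunct on the right-hand side to exactly one of these conditions, using the semantics of the building blocks already established earlier in the section, namely $v \models \varphi^{d(min,max)}_{\vartriangleright n}$ iff $|v| \vartriangleright n+1$, $v \models \varphi_{pref_k \blacktriangleright s}$ iff $pref_k(v) \blacktriangleright s$ (and analogously for suffixes), and $v \models \varphi_{\gamma(\alpha) \trianglelefteq n}$, $v \models \varphi_{\sigma(\alpha) \trianglelefteq n}$ iff $\gamma(\alpha,v) \trianglelefteq n$, $\sigma(\alpha,v) \trianglelefteq n$.

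For condition~\ref{gameLength} I would split on whether $|w| \leq 2^r + 1$. When $d(min^w,max^w) \leq 2^r$ the second disjunct of condition~\ref{gameLength} is impossible, so both the condition and $\varphi^{r,length}_{w} = \varphi^{d(min,max)}_{= d(min^w,max^w)}$ reduce to $d(min^v,max^v) = d(min^w,max^w)$; when $d(min^w,max^w) > 2^r$, both $\varphi^{d(min,max)}_{> 2^r}$ and condition~\ref{gameLength} reduce to $d(min^v,max^v) > 2^r$, since equal lengths would then also force $d(min^v,max^v) > 2^r$. Condition~\ref{gameBorder} matches $\varphi^{r,pref}_{w} \wedge \varphi^{r,suff}_{w}$ directly from the prefix/suffix semantics, where in the edge case $|w| < 2^r$ one reads $pref_{2^r}(w)$ as $w$ and checks consistency with the length conjunct (which already pins $|v| = |w|$).

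The main obstacle is matching condition~\ref{gameSub} with $\varphi^{r, sub}_{w}$, handled $\alpha$ by $\alpha$. When $q_\alpha + \sigma(\alpha,w) \leq r$, the clause $\sigma(\alpha,w) + q_\alpha > r$ of condition~\ref{gameSub} fails for $w$, so condition~\ref{gameSub} forces $\sigma(\alpha,v) = \sigma(\alpha,w)$ and $\gamma(\alpha,v) = \gamma(\alpha,w)$, which is precisely $\varphi^{r,\alpha}_{w}$ in this branch. When $q_\alpha + \sigma(\alpha,w) > r$, I must verify that $\varphi_{\sigma(\alpha) \geq r - q_\alpha + 1}$, i.e. $\sigma(\alpha,v) + q_\alpha > r$, captures the disjunctive condition~\ref{gameSub}: if $\sigma(\alpha,v) + q_\alpha > r$ both structures satisfy the ``$> r$'' clause and the condition holds, whereas if $\sigma(\alpha,v) + q_\alpha \leq r$ then $\sigma(\alpha,v) < \sigma(\alpha,w)$, so the values necessarily differ and the condition fails, exactly mirroring the failure of the formula. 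Finally I would observe that the conjunction $\bigwedge \{ \varphi^{r,\alpha}_{w} \mid |\alpha| = 2^q - 1, q > 0 \}$ is effectively finite, since for $q_\alpha > r$ the threshold $r - q_\alpha + 1$ is non-positive and $\varphi^{r,\alpha}_{w}$ is a tautology imposing no constraint, in agreement with condition~\ref{gameSub} being vacuous for such $\alpha$. Collecting the three matches yields that $v \models \varphi^r_w$ iff $v$ satisfies all four conjuncts, and since $v$ was arbitrary the biconditional is valid.
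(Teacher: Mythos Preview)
Your proposal is correct and follows essentially the same route as the paper: invoke Theorem~\ref{teoEFHintikka} to trade $v\models\varphi^r_w$ for a Duplicator win in $\mathcal{G}_r(w,v)$, then Theorem~\ref{EFGamesOnStrings} to obtain the three combinatorial conditions, and finally match each conjunct via the same case splits on $|w|\leq 2^r+1$ and on $q_\alpha+\sigma(\alpha,w)\leq r$. If anything, your write-up is a bit more careful than the paper's in two places: you correctly deduce $\sigma(\alpha,v)=\sigma(\alpha,w)$ \emph{and} $\gamma(\alpha,v)=\gamma(\alpha,w)$ in the case $q_\alpha+\sigma(\alpha,w)\leq r$ (the paper writes ``or'' there), and you explicitly address why the a~priori infinite conjunction over all $\alpha$ is harmless for $q_\alpha>r$.
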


\begin{proof}
	Let $u \models \varphi^{r}_{w}$. Then, the Duplicator has a winning strategy in $\mathcal{G}_r(w, u)$ and the conditions of Theorem~\ref{EFGamesOnStrings} hold. Then, $d(min^w, max^w) = d(min^u, max^u)$ or $d(min^w, max^w) > 2^r$ and $d(min^u, max^u) > 2^r$. We have two cases depending on the size of $w$:\\
	1. $|w| \leq 2^r + 1$. Then, $d(min^w, max^w) = d(min^u, max^u)$ and it follows that $u \models \varphi^{r, length}_{w}$.\\
	2. $|w| > 2^r + 1$. Then, $d(min^w, max^w) > 2^r$ and $d(min^u, max^u) > 2^r$. Therefore, $u \models \varphi^{r, length}_{w}$.
	
	From Theorem~\ref{EFGamesOnStrings}, it also holds that $pref_{2^r}(w) = pref_{2^r}(u)$ and $suff_{2^r}(w) = suff_{2^r}(u)$. Then, $u \models \varphi^{r, pref}_{w} \wedge \varphi^{r, suff}_{w}$.
	
	From condition~\ref{gameSub} of Theorem~\ref{EFGamesOnStrings}, it holds that $\sigma(\alpha, w) + q_{\alpha} > r$ and $\sigma(\alpha, u) + q_{\alpha} > r$ for all $\alpha$ such that $|\alpha| = 2^{q_{\alpha}} - 1$ and $\sigma(\alpha, w) \neq \sigma(\alpha, u)$ or $\gamma(\alpha, w) \neq \gamma(\alpha, u)$. Let $\alpha$ such that $|\alpha| = 2^q - 1$. We have two cases:\\
	1. $q + \sigma(\alpha, w) \leq r$. Then, $\sigma(\alpha, w) = \sigma(\alpha, u)$ or $\gamma(\alpha, w) = \gamma(\alpha, u)$. Therefore, $u \models \varphi^{r, \alpha}_w$.\\
	2. $q + \sigma(\alpha, w) > r$. If $\sigma(\alpha, u) + q \leq r$, then $\sigma(\alpha, w) \neq \sigma(\alpha, u)$. Therefore, $\sigma(\alpha, u) + q > r$. Hence, $u \models \varphi^{r, \alpha}_w$.
	Finally, $u \models \varphi^{r, sub}_{w}$.\\
	
	Conversely, let $u \models \varphi^{r, length}_{w} \wedge \varphi^{r, pref}_{w} \wedge \varphi^{r, suff}_{w} \wedge \varphi^{r, sub}_{w}$. We have that $pref_{2^r}(w) = pref_{2^r}(u)$ and $suff_{2^r}(w) = suff_{2^r}(u)$ because $u \models \varphi^{r, pref}_{w} \wedge \varphi^{r, suff}_{w}$. We also have that $u \models \varphi^{r, length}_{w}$. We have two cases:\\
	1. $|w| \leq 2^r + 1$. Then, $d(min^w, max^w) = d(min^u, max^u)$.\\
	2. $|w| > 2^r + 1$. Then, $d(min^w, max^w) > 2^r$ and $d(min^u, max^u) > 2^r$.
	
	\noindent It also holds that $u \models \varphi^{r, sub}_{w}$. Let $\alpha$ such that $|\alpha| = 2^q - 1$ and $\sigma(\alpha, w) \neq \sigma(\alpha, u)$ or $\gamma(\alpha, w) \neq \gamma(\alpha, u)$. Then, $u \models \varphi^{r, \alpha}_w$. We have that $\sigma(\alpha, w) + q > r$, otherwise, $\sigma(\alpha, w) = \sigma(\alpha, u)$ and $\gamma(\alpha, w) = \gamma(\alpha, u)$. Therefore, $\sigma(\alpha, u) + q > r$. Then, for all $\alpha$ such that $|\alpha| = 2^q - 1$ and $\sigma(\alpha, w) \neq \sigma(\alpha, u)$ or $\gamma(\alpha, w) \neq \gamma(\alpha, u)$, we have that $\sigma(\alpha, w) + q > r$ and $\sigma(\alpha, u) + q > r$. All the conditions of Theorem~\ref{EFGamesOnStrings} hold. Then, the Duplicator has a winning strategy in $\mathcal{G}_r(w, u)$. It follows that $u \models \varphi^r_w$.
\end{proof}

Now, we need the following lemmas.

\begin{lem}\label{lemaLength}
	Let $r$ be a natural number and $w$ a string. There is a set of strings $V_{length}$ such that $\varphi^{r, length}_w$ is equivalent to a boolean combination of sentences in $\bigcup_{v \in V_{length}} \Phi^r_{w, v}$.
\end{lem}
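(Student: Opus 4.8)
The plan is to unfold the definition of $\varphi^{r, length}_w$ into its two cases and, in each case, realize the resulting length-threshold sentence as a boolean combination of distinguishability formulas drawn from $\Phi^{r, length}_{w, v}$ for suitably chosen strings $v$. Recall that $w' \models \varphi^{d(min, max)}_{\vartriangleright n}$ iff $|w'| \vartriangleright n + 1$, and that $\Phi^{r, length}_{w, v}$ consists of formulas $\varphi^{d(min, max)}_{\leq n}$ when $|w| < |v|$ and of formulas $\varphi^{d(min, max)}_{\geq n}$ when $|w| > |v|$, each subject to explicit bounds on $n$. Thus the strings $v$ placed in $V_{length}$ serve only to make the required threshold $n$ fall inside these admissible ranges; their symbols are irrelevant and only their lengths matter.

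First I would treat the case $|w| \leq 2^r + 1$, where $\varphi^{r, length}_w = \varphi^{d(min, max)}_{= |w| - 1}$, which equals $\varphi^{d(min, max)}_{\leq |w| - 1} \wedge \varphi^{d(min, max)}_{\geq |w| - 1}$ by definition. For the first conjunct I would pick any $v$ with $|v| = |w| + 1$; since $|w| - 1 \leq 2^r$ in this case, one checks that $min(2^r, |v| - 2) = |w| - 1$, so the index $n = |w| - 1$ is admissible and $\varphi^{d(min, max)}_{\leq |w| - 1} \in \Phi^{r, length}_{w, v}$. For the second conjunct I would pick any $v'$ with $|v'| = |w| - 1$; again $min(2^r + 1, |w| - 1) = |w| - 1$, so $n = |w| - 1$ is admissible and $\varphi^{d(min, max)}_{\geq |w| - 1} \in \Phi^{r, length}_{w, v'}$. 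Setting $V_{length} = \{v, v'\}$ then exhibits $\varphi^{r, length}_w$ as the conjunction of two members of $\bigcup_{v \in V_{length}} \Phi^r_{w, v}$.

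Next I would treat the case $|w| > 2^r + 1$, where $\varphi^{r, length}_w = \varphi^{d(min, max)}_{> 2^r} = \varphi^{d(min, max)}_{\geq 2^r + 1}$. Choosing any $v'$ with $|v'| = 2^r + 1$ gives $|w| > |v'|$, and because $|w| - 1 \geq 2^r + 1$ the upper bound $min(2^r + 1, |w| - 1)$ equals $2^r + 1$; hence $n = 2^r + 1$ is admissible and $\varphi^{d(min, max)}_{\geq 2^r + 1} \in \Phi^{r, length}_{w, v'}$. Taking $V_{length} = \{v'\}$ then exhibits $\varphi^{r, length}_w$ itself as a single member of $\bigcup_{v \in V_{length}} \Phi^r_{w, v}$.

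The only genuine work, and the main obstacle, is the bookkeeping for the index ranges: in each case one must confirm that the desired threshold $n$ lies between the stated lower bound and the two-sided minimum, and this is exactly where the split on whether $|w| \leq 2^r + 1$ or $|w| > 2^r + 1$ is needed to control the $min(2^r, \cdot)$ and $min(2^r + 1, \cdot)$ terms. A minor edge case is $|w| = 1$ in the first case, where a string $v'$ of length $|w| - 1 = 0$ does not exist; here, however, $\varphi^{d(min, max)}_{\geq 0}$ is valid, so the equivalence reduces to $\varphi^{r, length}_w \equiv \varphi^{d(min, max)}_{\leq 0}$ and the single string $v$ of length $2$ already suffices.
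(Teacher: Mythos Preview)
Your proposal is correct and follows essentially the same approach as the paper's proof: both split on the two cases of $\varphi^{r,length}_w$, and in each case choose strings $v$ of lengths $|w|\pm 1$ (for the equality case) or $2^r+1$ (for the $>2^r$ case) so that the needed threshold $n$ lands in the admissible range of $\Phi^{r,length}_{w,v}$. You additionally address the edge case $|w|=1$, which the paper's proof does not mention.
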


\begin{proof}
	If $\varphi^{r, length}_w = \varphi^{d(min, max)}_{> 2^r}$, then let $V_{length} = \{v\}$ such that $|v| = 2^r + 1$. Then, $d(min^{v}, max^{v}) = 2^r$. Observe that $\models \varphi^{r, length}_w \leftrightarrow \varphi^{d(min, max)}_{\geq 2^r + 1}$ and $\varphi^{d(min, max)}_{\geq 2^r + 1} \in \Phi^r_{w, v}$ because $|w| > |v|$ and $|v| \leq 2^r + 1 \leq min(2^r + 1, |u| - 1)$. If $\varphi^{r, length}_w = \varphi^{d(min, max)}_{= d(min^w, max^w)}$, then let $V_{length} = \{ v_1, v_2 \}$ such that $|v_1| = |w| - 1$ and $|v_2| = |w| + 1$. Thus, $|v_1| < |w|$ and $|v_1| \leq d(min^w, max^w) \leq |w| - 1$. It follows that $\varphi^{d(min, max)}_{\geq d(min^w, max^w)} \in \Phi^r_{w, v_1}$. We also have that $|v_2| > |w|$ and $|w| - 1 \leq d(min^w, max^w) \leq |v_2| - 2$. Therefore, $\varphi^{d(min, max)}_{\leq d(min^w, max^w)} \in \Phi^r_{w, v_2}$. Obviously, $\varphi^{r, length}_w$ is equivalent to $\varphi^{d(min, max)}_{\geq d(min^w, max^w)} \wedge \varphi^{d(min, max)}_{\leq d(min^w, max^w)}$.
\end{proof}

\begin{lem}\label{lemaprefsuff}
	Let $r$ be a natural number and $w$ a string. There is a set of strings $V_{prefsuff}$ such that $\varphi^{r, pref}_w \wedge \varphi^{r, suff}_w$ is equivalent to a boolean combination of sentences in $\bigcup_{v \in V_{prefsuff}} \Phi^r_{w, v}$.
\end{lem}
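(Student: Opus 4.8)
The plan is to observe that, by definition, $\varphi^{r, pref}_w \wedge \varphi^{r, suff}_w$ is literally the conjunction $\varphi_{pref_{2^r} = pref_{2^r}(w)} \wedge \varphi_{suff_{2^r} = suff_{2^r}(w)}$, and then to recognise each of these two conjuncts as a \emph{single} distinguishability formula belonging to $\Phi^r_{w, v}$ for a well-chosen witness string $v$. Once this is done, the conjunction of the two conjuncts is already a boolean combination of formulas drawn from $\bigcup_{v \in V_{prefsuff}} \Phi^r_{w, v}$, where $V_{prefsuff}$ is simply the set of chosen witnesses (at most two). This mirrors the strategy of Lemma~\ref{lemaLength}, where specific witness strings were engineered so that a fixed formula would fall into the corresponding distinguishability set.

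For the prefix conjunct, set $p = \min(2^r, |w|)$, so that $\varphi^{r, pref}_w = \varphi_{pref_{p} = pref_{p}(w)}$. I would choose a witness $v_1$ with $|v_1| \geq p$ whose prefix of length $p$ differs from $pref_p(w)$ --- for instance, $w$ with one of its first $p$ symbols replaced by a different alphabet letter (using $|\Sigma| \geq 2$). Inspecting the definition of $\Phi^{r, pref}_{w, v_1}$, the side conditions $pref_p(w) \neq pref_p(v_1)$ and $p \leq \min(2^r, |w|, |v_1|)$ are then both met (the latter because $p \leq 2^r$, $p \leq |w|$, and $p \leq |v_1|$), so $\varphi_{pref_p = pref_p(w)} \in \Phi^r_{w, v_1}$; note also that $v_1 \neq w$ automatically, as required by the EF-game convention. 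Symmetrically, I would pick $v_2$ with $|v_2| \geq p$ and $suff_p(v_2) \neq suff_p(w)$, which places $\varphi^{r, suff}_w = \varphi_{suff_p = suff_p(w)}$ in $\Phi^r_{w, v_2}$. Taking $V_{prefsuff} = \{v_1, v_2\}$ then completes the argument.

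The step I expect to be delicate is guaranteeing that suitable witnesses exist together with the length bound $p \leq \min(2^r, |w|, |v|)$, and in particular the two boundary situations. When $|w| < 2^r$, the ``prefix of length $2^r$'' must be read as the whole string, i.e.\ as the prefix of length $p = \min(2^r, |w|)$; the construction above already accounts for this by using $p$ rather than $2^r$ as the comparison length. The genuinely degenerate case is a single-letter alphabet, where no witness can differ from $w$ on a prefix (or suffix) of a given length: there $\varphi_{pref_p = pref_p(w)}$ is equivalent to the length statement ``$|w| \geq p$'', which is itself expressible as a boolean combination of distinguishability formulas by choosing a shorter length witness exactly as in the proof of Lemma~\ref{lemaLength}. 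Handling these boundary cases cleanly, rather than the main construction, is where the care is needed.
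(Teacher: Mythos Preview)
Your proposal is correct and follows essentially the same approach as the paper: pick a witness $v_1$ whose length-$p$ prefix differs from $pref_p(w)$ (the paper takes $|v_1|=|w|$ with $v_1\neq w$, which in the case $|w|\le 2^r$ forces exactly this), so that $\varphi_{pref_p=pref_p(w)}\in\Phi^r_{w,v_1}$, and similarly for the suffix with a second witness $v_2$. If anything, you are more careful than the paper---you make explicit the constraint $pref_p(v_1)\neq pref_p(w)$ (which the paper's choice does not obviously guarantee when $|w|>2^r$) and you flag the unary-alphabet degeneracy, which the paper ignores.
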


\begin{proof}
	Let $v_1$ such that $|v_1| = |w|$ and $v_1 \neq w$. If $|w| \leq 2^r$, then it follows that $\models \varphi_{pref_{2^r} = pref_{2^r}(w)} \leftrightarrow \varphi_{pref_{|w|} = w}$. Hence, $\varphi_{pref_{|w|} = w} \in \Phi^r_{w, v_1}$. If $|w| > 2^r$, then $\varphi_{pref_{2^r} = pref_{2^r}(w)} \in \Phi^r_{w, v_1}$. Obviously, the case for $\varphi_{suff_{2^r} = suff_{2^r}(w)}$ is analogous. Let $v_2$ such that $\varphi_{suff_{2^r} = suff_{2^r}(w)} \in \Phi^r_{w, v_2}$. Therefore, $V_{prefsuff} = \{ v_1, v_2\}$.
\end{proof}

\begin{lem}\label{lemaalpha}
	Let $r$ be a natural number and $w, \alpha$ strings. There is a set of strings $V_{\alpha}$ such that $\varphi^{r, \alpha}_w$ is equivalent to a boolean combination of sentences in $\bigcup_{v \in V_{\alpha}} \Phi^r_{w, v}$.
\end{lem}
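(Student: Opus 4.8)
The plan is to follow the same template as the proofs of Lemma~\ref{lemaLength} and Lemma~\ref{lemaprefsuff}: split on the two clauses in the definition of $\varphi^{r,\alpha}_w$, rewrite the resulting sentence as an explicit boolean combination of the atomic scattering and multiplicity sentences $\varphi_{\sigma(\alpha)\geq n}$ and $\varphi_{\gamma(\alpha)\geq n}$ (unfolding the abbreviations $\varphi_{\sigma(\alpha)=n}$, $\varphi_{\gamma(\alpha)=n}$, and $\varphi_{\cdot<n}:=\neg\varphi_{\cdot\geq n}$), and then exhibit, for each atom that must appear positively, a witness string $v$ that places that atom inside $\Phi^r_{w,v}$. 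The set $V_\alpha$ will be the finite collection of these witnesses, and the claimed boolean combination will be read off, conjunct by conjunct, from the two clauses, its correctness following from Lemma~\ref{setConsis}.

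In the second clause ($q_\alpha+\sigma(\alpha,w)>r$) we have $\varphi^{r,\alpha}_w=\varphi_{\sigma(\alpha)\geq r-q_\alpha+1}$, which is already a single atom; here I would take $V_\alpha=\{v\}$ for a string $v$ with $\sigma(\alpha,v)=r-q_\alpha$. Since $\sigma(\alpha,w)\geq r-q_\alpha+1>\sigma(\alpha,v)$ and $\sigma(\alpha,v)<r-q_\alpha+1\leq min(r-q_\alpha+1,\sigma(\alpha,w))$, the defining inequalities of $\Phi_{w,v}^{r,sub}$ are met and $\varphi_{\sigma(\alpha)\geq r-q_\alpha+1}\in\Phi^r_{w,v}$. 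In the first clause ($q_\alpha+\sigma(\alpha,w)\leq r$) I would treat the scattering conjunct $\varphi_{\sigma(\alpha)=\sigma(\alpha,w)}$, writing it as $\varphi_{\sigma(\alpha)\geq s}\wedge\varphi_{\sigma(\alpha)<s+1}$ with $s=\sigma(\alpha,w)$, and choosing witnesses $v_1$ with $\sigma(\alpha,v_1)=s-1$ and $v_2$ with $\sigma(\alpha,v_2)=s+1$. The bound $s\leq r-q_\alpha$ forces both indices $s$ and $s+1$ to be at most $r-q_\alpha+1$, so $\varphi_{\sigma(\alpha)\geq s}\in\Phi^r_{w,v_1}$ and $\varphi_{\sigma(\alpha)<s+1}\in\Phi^r_{w,v_2}$, and their conjunction is equivalent to the scattering conjunct. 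The multiplicity conjunct $\varphi_{\gamma(\alpha)=\gamma(\alpha,w)}$ is then to be handled in the same off-by-one fashion, with witnesses whose free multiplicity of $\alpha$ is one below and one above $\gamma(\alpha,w)$.

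The constructive core is the existence of witness strings with prescribed free scattering and free multiplicity of $\alpha$. For a prescribed scattering $m$ I would place $m$ isolated free occurrences of $\alpha$: concatenate $m$ copies of $\alpha$ separated by padding blocks of length greater than $2^{q_\alpha}$ and flanked at both ends by enough padding, so that each copy is free (far enough from $min$ and $max$) and lies in its own segment of the $2^{q_\alpha}$-segmentation; a prescribed multiplicity inside a single segment is instead obtained by a run of consecutive centered occurrences of $\alpha$. One then checks that the resulting string has exactly the intended values, that the comparisons with $w$ required by the definition hold, and that the witnesses can be padded so the remaining length, prefix, and suffix membership conditions are irrelevant.

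The delicate point, and what I expect to be the main obstacle, is verifying the index bound for the multiplicity conjunct in the first clause. The definition of $\Phi_{w,v}^{r,sub}$ admits $\varphi_{\gamma(\alpha)\geq n}$ and $\varphi_{\gamma(\alpha)<n}$ only for indices $n\leq r-q_\alpha+1$, that is, only at quantifier rank at most $r$; but the first clause is triggered by $\sigma(\alpha,w)\leq r-q_\alpha$, and since a single segment may contain up to $2^{q_\alpha}+1$ free occurrences, $\gamma(\alpha,w)$ can exceed $r-q_\alpha$. When $\gamma(\alpha,w)>r-q_\alpha$ the naive witnesses do not yield legal distinguishability formulas, so the exact multiplicity cannot be pinned down by $\gamma$-atoms of $\alpha$ alone. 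The way around this should exploit that, once $\sigma(\alpha,w)$ has been fixed, the free occurrences of $\alpha$ decompose into a bounded number of segments, each of bounded length, so that $\gamma(\alpha,w)$ is determined by these segment lengths; each segment length ought to be recoverable, within the rank bound, from the multiplicity or scattering of suitable substrings extending $\alpha$, whose atoms do live in some $\Phi^r_{w,v}$. Making this reduction precise, while certifying that every atom it produces respects the bound $n\leq r-q_\alpha+1$, is the heart of the argument.
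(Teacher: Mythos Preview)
Your treatment of the second clause and of the scattering conjunct in the first clause is exactly what the paper does: the same case split, the same witnesses, the same verification of the membership conditions for $\Phi^{r,sub}_{w,v}$.

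Where you diverge is on the multiplicity conjunct. The paper's proof simply writes ``Clearly, the case for $\varphi_{\gamma(\alpha)=\gamma(\alpha,w)}$ is analogous'' and posits witnesses $v_3,v_4$ with $\varphi_{\gamma(\alpha)\geq\gamma(\alpha,w)}\in\Phi^r_{w,v_3}$ and $\varphi_{\gamma(\alpha)<\gamma(\alpha,w)+1}\in\Phi^r_{w,v_4}$, without checking the index bound. You are right that this step is not literally analogous: membership in $\Phi^{r,sub}_{w,v}$ requires $n\leq r-q_\alpha+1$, and the first-clause hypothesis only gives $\sigma(\alpha,w)\leq r-q_\alpha$, not $\gamma(\alpha,w)\leq r-q_\alpha$. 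Since a single segment of the $2^{q_\alpha}$-segmentation can contain up to $2^{q_\alpha}+1$ free occurrences, one can have $\gamma(\alpha,w)>r-q_\alpha$ (e.g.\ $\alpha=a$, $q_\alpha=1$, $r=2$, $w=bbaaabb$ gives $\sigma(a,w)=1$ but $\gamma(a,w)=3$), and then neither $\varphi_{\gamma(\alpha)\geq\gamma(\alpha,w)}$ nor $\varphi_{\gamma(\alpha)<\gamma(\alpha,w)+1}$ lies in any $\Phi^r_{w,v}$.

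So the gap you flag is real, and the paper does not fill it; its proof is no more complete than your naive-witness step. Your suggested repair---encoding the exact multiplicity via scattering or multiplicity of longer substrings $\beta\supseteq\alpha$ whose $q_\beta$ is larger---is an additional idea that the paper does not contain. Whether that repair can be made to work within the stated lemma (as opposed to only for the full conjunction $\varphi^{r,sub}_w$ used downstream in Lemma~\ref{hintikkaSample}) is not something the paper settles.
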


\begin{proof}
	If $\varphi^{r, \alpha}_w = \varphi_{\sigma(\alpha) \geq r - q_{\alpha} + 1}$, then $\sigma(\alpha, w) \geq r - q_{\alpha} + 1$. Let $V_{\alpha} = \{ v \}$ such that $\sigma(\alpha, w) > \sigma(\alpha, v)$ and $\sigma(\alpha, v) < r - q_{\alpha} + 1$. Thus, $\varphi_{\sigma(\alpha) \geq r - q_{\alpha} + 1} \in \Phi^r_{w, v}$. If $\varphi^{r, \alpha}_w = \varphi_{\sigma(\alpha) = \sigma(\alpha, w)} \wedge \varphi_{\gamma(\alpha) = \gamma(\alpha, w)}$, then $\sigma(\alpha, w) \leq r - q_{\alpha}$. For $\varphi_{\sigma(\alpha) = \sigma(\alpha, w)}$, let $v_1$ such that $\sigma(\alpha, w) > \sigma(\alpha, v_1)$. Then, $\varphi_{\sigma(\alpha) \geq \sigma(\alpha, w)} \in \Phi^r_{w, v_1}$. Let $v_2$ such that $\sigma(\alpha, v_2) > \sigma(\alpha, w)$. Note that $\sigma(\alpha, w) < \sigma(\alpha, w) + 1 \leq min(r - q_{\alpha} + 1, \sigma(\alpha, v_2))$. Then, $\varphi_{\sigma(\alpha) < \sigma(\alpha, w) + 1} \in \Phi^r_{w, v_1}$. Clearly, the case for $\varphi_{\gamma(\alpha) = \gamma(\alpha, w)}$ is analogous. Let $v_3$ and $v_4$ such that $\varphi_{\gamma(\alpha) \geq \gamma(\alpha, w)} \in \Phi^r_{u, v_3}$ and $\varphi_{\gamma(\alpha) < \gamma(\alpha, w) + 1} \in \Phi^r_{u, v_4}$. Therefore $V_{\alpha} = \{v_1, v_2, v_3, v_4 \}$.
\end{proof}

\begin{restatable}{lem}{hintikkaSample}\label{hintikkaSample}
	Let $r$ be a natural number and $w$ a string. There is a set of strings $V$ such that $\varphi^{r}_{w}$ is a boolean combination of sentences in $\bigcup_{v \in V} \Phi^r_{w, v}$.
\end{restatable}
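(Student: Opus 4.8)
The plan is to bootstrap from the decomposition already established in Lemma~\ref{hintikkaStrings}, which states that $\varphi^r_w$ is equivalent to the conjunction $\varphi^{r, length}_w \wedge \varphi^{r, pref}_w \wedge \varphi^{r, suff}_w \wedge \varphi^{r, sub}_w$, and then to rewrite each conjunct as a boolean combination of distinguishability formulas using Lemmas~\ref{lemaLength}, \ref{lemaprefsuff}, and \ref{lemaalpha}. The glue that makes this work is the elementary observation that a boolean combination of sentences from $\bigcup_{v \in V_1} \Phi^r_{w, v}$ together with one from $\bigcup_{v \in V_2} \Phi^r_{w, v}$ is itself a boolean combination of sentences from $\bigcup_{v \in V_1 \cup V_2} \Phi^r_{w, v}$. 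Hence it suffices to collect the string sets produced by each auxiliary lemma and take $V$ to be their union.

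Concretely, I would first invoke Lemma~\ref{lemaLength} on $\varphi^{r, length}_w$ to obtain a set $V_{length}$, and Lemma~\ref{lemaprefsuff} on $\varphi^{r, pref}_w \wedge \varphi^{r, suff}_w$ to obtain a set $V_{prefsuff}$, turning these conjuncts into boolean combinations over $\bigcup_{v \in V_{length}} \Phi^r_{w, v}$ and $\bigcup_{v \in V_{prefsuff}} \Phi^r_{w, v}$, respectively. The main obstacle is the remaining conjunct $\varphi^{r, sub}_w = \bigwedge \{ \varphi^{r, \alpha}_w \mid |\alpha| = 2^q - 1, q > 0 \}$, since this is a conjunction indexed by \emph{infinitely} many strings $\alpha$, while Lemma~\ref{lemaalpha} only supplies finitely many witnessing strings per $\alpha$. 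The key to resolving this is that all but finitely many of these conjuncts are trivial: whenever $q_\alpha > r$ we have $q_\alpha + \sigma(\alpha, w) > r$, so by the definition of $\varphi^{r, \alpha}_w$ we fall into the second case and obtain $\varphi^{r, \alpha}_w = \varphi_{\sigma(\alpha) \geq r - q_\alpha + 1}$ with $r - q_\alpha + 1 \leq 0$; since $\sigma(\alpha, w) \geq 0$ always holds, this formula is a tautology and may be dropped. Consequently $\varphi^{r, sub}_w$ is equivalent to the finite conjunction $\bigwedge_{\alpha \in A} \varphi^{r, \alpha}_w$, where $A = \{ \alpha \mid |\alpha| = 2^{q_\alpha} - 1, q_\alpha \leq r \}$ is a finite set of strings.

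Applying Lemma~\ref{lemaalpha} to each $\alpha \in A$ then yields finite sets $V_\alpha$, whose union $\bigcup_{\alpha \in A} V_\alpha$ is finite, and $\varphi^{r, sub}_w$ becomes a boolean combination of sentences in $\bigcup_{v \in \bigcup_{\alpha \in A} V_\alpha} \Phi^r_{w, v}$. To finish, I would set $V = V_{length} \cup V_{prefsuff} \cup \bigcup_{\alpha \in A} V_\alpha$ and substitute the four equivalent boolean combinations back into the decomposition from Lemma~\ref{hintikkaStrings}; by the gluing remark this exhibits $\varphi^r_w$ as a boolean combination of sentences in $\bigcup_{v \in V} \Phi^r_{w, v}$, completing the proof. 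The only genuinely non-routine step is the finiteness reduction of the sub-conjunction; the rest is assembly of the preceding lemmas.
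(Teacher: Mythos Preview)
Your proposal is correct and follows essentially the same route as the paper: invoke Lemma~\ref{hintikkaStrings} to decompose $\varphi^r_w$, then apply Lemmas~\ref{lemaLength}, \ref{lemaprefsuff}, and~\ref{lemaalpha} to each conjunct and take $V$ to be the union of the resulting witness sets. Your treatment is in fact slightly more careful than the paper's, which writes $V = \bigcup_{\{\alpha \mid |\alpha| = 2^q - 1,\, q > 0\}} V_\alpha \cup V_{length} \cup V_{prefsuff}$ without addressing the infinitely many $\alpha$; your observation that the conjuncts with $q_\alpha > r$ are tautologies (so only finitely many $V_\alpha$ are needed) patches a point the paper leaves implicit.
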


\begin{proof}
	By Lemma~\ref{hintikkaStrings}, $\models \varphi^r_w \leftrightarrow (\varphi^{r, length}_{w} \wedge \varphi^{r, pref}_{w} \wedge \varphi^{r, suff}_{w} \wedge \varphi^{r, sub}_{w})$. Let $V = \bigcup_{ \{\alpha \mid |\alpha| = 2^q - 1, q > 0\} }$ $V_{\alpha}$ $\cup$ $V_{length}$ $\cup$ $V_{prefsuff}$ $\cup$  as in Lemma~\ref{lemaLength}, Lemma~\ref{lemaprefsuff}, and Lema~\ref{lemaalpha}. From these lemmas, it follows that, $\varphi^r_w$ is equivalent to a boolean combination of sentences in $\bigcup_{v \in V} \Phi^r_{w, v}$.
\end{proof}

The next result is related to Theorem~\ref{hintikkaRepresentative}. It suggests that our approach is likely to find any first-order sentence given a suitable sample of strings. Recall that, by Theorem~\ref{hintikkaRepresentative}, any first-order sentence is equivalent to a disjunction of Hintikka formulas. Thus, we have the following result.

\begin{restatable}{teo}{algoRobust}
	Let $\varphi$ be a first-order sentence over successor string structures. Then, $\varphi$ is equivalent to a boolean combination of distinguishability formulas.
\end{restatable}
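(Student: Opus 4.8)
The plan is to reduce the statement to the representability of Hintikka formulas, which I have already established piecewise. By Theorem~\ref{hintikkaRepresentative}, any first-order sentence $\varphi$ of quantifier rank at most $r$ is equivalent to a finite disjunction $\varphi^r_{u_1} \vee \dots \vee \varphi^r_{u_s}$ of $r$-Hintikka formulas. Since a boolean combination of boolean combinations of distinguishability formulas is again a boolean combination of distinguishability formulas, it therefore suffices to show that each individual Hintikka sentence $\varphi^r_{u_i}$ is equivalent to a boolean combination of distinguishability formulas. This is precisely what Lemma~\ref{hintikkaSample} provides: for each string $w = u_i$ and the given rank $r$, there is a finite set of strings $V$ such that $\varphi^r_w$ is equivalent to a boolean combination of sentences drawn from $\bigcup_{v \in V} \Phi^r_{w, v}$, and all such sentences are distinguishability formulas by definition.

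First I would fix the rank $r := qr(\varphi)$ and invoke Theorem~\ref{hintikkaRepresentative} to obtain the strings $u_1, \dots, u_s$ with $\models \varphi \leftrightarrow \bigvee_{i=1}^{s} \varphi^r_{u_i}$. Next, for each $i$ I would apply Lemma~\ref{hintikkaSample} with $w := u_i$ to produce a set $V_i$ and a boolean combination $\psi_i$ of sentences in $\bigcup_{v \in V_i} \Phi^r_{u_i, v}$ with $\models \varphi^r_{u_i} \leftrightarrow \psi_i$. Finally I would take the disjunction $\psi := \bigvee_{i=1}^{s} \psi_i$; since each $\psi_i$ is a boolean combination of distinguishability formulas, so is $\psi$, and chaining the two equivalences gives $\models \varphi \leftrightarrow \psi$, as required.

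The argument is essentially a two-step assembly of results already in hand, so there is no genuinely hard step in the sense of a new combinatorial or game-theoretic obstacle; the real work was front-loaded into Lemma~\ref{hintikkaStrings} (decomposing a Hintikka formula into length, prefix, suffix, and subword components) and Lemmas~\ref{lemaLength}, \ref{lemaprefsuff}, and \ref{lemaalpha} (realizing each component as a boolean combination of distinguishability formulas by exhibiting appropriate witness strings $v$). The one point deserving care is that Theorem~\ref{hintikkaRepresentative} is stated for arbitrary finite structures, so I must confirm that the witnessing structures $u_1, \dots, u_s$ may be taken to be successor string structures; this is safe because $\varphi$ is a sentence over the vocabulary $\tau$, and a structure falsifying $\varphi$ in our intended class contributes nothing, so one may discard any $u_i$ that is not (isomorphic to) a successor string structure without affecting the equivalence restricted to strings, which is all that $L(\varphi)$ and the consistency notion depend on.
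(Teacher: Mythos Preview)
Your proposal is correct and follows essentially the same approach as the paper: fix $r = qr(\varphi)$, apply Theorem~\ref{hintikkaRepresentative} to write $\varphi$ as a disjunction of $r$-Hintikka formulas, then invoke Lemma~\ref{hintikkaSample} on each disjunct and take the resulting boolean combination. Your additional remark about restricting the witnesses $u_i$ to successor string structures is a useful clarification that the paper leaves implicit.
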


\begin{proof}
	Let $r$ such that $qr(\varphi) = r$. From Theorem~\ref{hintikkaRepresentative} it follows that $\models \varphi \leftrightarrow \varphi^r_{u_1} \vee ... \varphi^r_{u_s}$. Let $U = \{u_1, ..., u_s\}$. In accord to Lemma~\ref{hintikkaSample}, let $V_i$ be such that $\varphi^r_{u_i}$ is equivalent to a boolean combination of sentences in $\bigcup_{v \in V_i} \Phi^r_{u_i, v}$. Therefore, the sentence $\varphi$ is equivalent to a boolean combination of sentences in $\bigcup^s_{i = 1} (\bigcup_{v \in V_i} \Phi^r_{u_i, v})$.
\end{proof}

\section{The Algorithm and Its Analysis}\label{sec4}

In this section, we define an algorithm for finding a first-order sentence $\varphi_S$ from a sample of strings $S$. Subformulas of $\varphi_S$ are distinguishability formulas from sets of the form $\Phi^r_{u, v}$ such that $u \in P$ and $v \in N$. We also give an example of how the algorithm works. We guarantee that our algorithm runs in polynomial time in the size of the input sample $S$. The size of the sample $S$ is the sum of the lengths of all strings it includes. We use $|S|$ to denote the size of the sample $S$. We also show that $\varphi_S$ returned by our algorithm is consistent with $S$. Furthermore, we also prove that $\varphi_S$ is a sentence of minimal quantifier rank consistent with $S$. The pseudocode of our algorithm is in Algorithm~\ref{alg:distinguir}.

\begin{algorithm}
	\caption{}\label{alg:distinguir}
\begin{algorithmic}
\State \textbf{Input:} Sample of strings $S = (P, N)$
\State $r \gets max\{EFsim(u, v) \mid u \in P, v \in N\}$
\State $\varphi_S \gets \bigvee_{u \in P} \bigwedge_{v \in N}$ \textbf{choose} $\varphi \in \Phi^r_{u, v}$
\State \textbf{return} $\varphi_S$

\end{algorithmic}
\end{algorithm}

First, the algorithm finds the minimum value $r$ such that there exists a sentence of quantifier rank $r$ that is consistent with the input sample $S$. After that, the algorithm constructs $\varphi_S$. It goes through all strings in $P \cup N$, and, for $u \in P, v \in N$, it chooses a formula $\varphi \in \Phi^r_{u, v}$. For each $u \in P$, Algorithm~\ref{alg:distinguir} builds a conjunction of sentences in $\bigcup_{v \in N} \Phi^r_{u, v}$. Finally, it returns a disjunction of such conjunctions. In the following, we show an example of how this algorithm works on a simple instance.

\begin{exam}\label{exampleAlgo}
	Let $S$ be the sample in Table~\ref{tableExample1}. Note that $max \{EFsim(u,v) \mid u \in P, v \in N\} = 1$ as witnessed by $\sigma(p,stviie) + 1 \leq 1$ and $\sigma(p,stpiie) + 1 > 1$. Clearly, $\varphi_{pref_1 = s} \in \Phi^{1}_{stviil, ktvive}$, $\varphi_{suff_1 \neq e} \in \Phi^{1}_{stviil, stpiie}$, $\varphi_{pref_1 = s} \in \Phi^{1}_{stviie, ktvive}$, and $\varphi_{\sigma(p) < 1} \in \Phi^{1}_{stviie, stpiie}$. Therefore, Algorithm~\ref{alg:distinguir} returns $\varphi_S$ below. Observe that $\varphi_S$ is consistent with $S$ and $qr(\varphi_S) = 1$.
	$$\varphi_S = (\varphi_{pref_1 = s} \wedge \varphi_{suff_1 \neq e}) \vee (\varphi_{pref_1 = s} \wedge \varphi_{\sigma(p) < 1}).$$
\end{exam}

\noindent In the following, we prove the correctness and the time complexity of our algorithm. First, we show that it returns a sentence that is consistent with the sample. After that, we show that it returns a sentence of minimal quantifier rank. Then, we prove that the running time of our learning algorithm is polynomial in the size of the given sample.

\begin{restatable}{teo}{algoCons}\label{consisData}
	Let $S$ be a sample and $\varphi_S$ returned by Algorithm~\ref{alg:distinguir}. Then, $\varphi_S$ is consistent with $S$.
\end{restatable}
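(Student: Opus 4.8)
The plan is to show directly that $\varphi_S$, as constructed by Algorithm~\ref{alg:distinguir}, holds in every string of $P$ and fails in every string of $N$. Recall that $\varphi_S = \bigvee_{u \in P} \bigwedge_{v \in N} \varphi_{u,v}$, where each $\varphi_{u,v}$ is a formula chosen from $\Phi^r_{u,v}$. The central tool is Lemma~\ref{setConsis}, which guarantees that any $\varphi \in \Phi^r_{u,v}$ satisfies $u \models \varphi$ and $v \not\models \varphi$. The first thing I would verify is that the \textbf{choose} instruction never fails, i.e.\ that $\Phi^r_{u,v}$ is nonempty for each relevant pair $u \in P$, $v \in N$. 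This follows because $r = \max\{EFsim(u,v) \mid u \in P, v \in N\} \geq EFsim(u,v)$ for each such pair, so the Spoiler has a winning strategy in $\mathcal{G}_r(u,v)$, and by the remark following the definition of distinguishability formulas this is equivalent to $\Phi^r_{u,v} \neq \emptyset$.

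For the positive direction, I would fix an arbitrary $w \in P$ and show $w \models \varphi_S$. It suffices to exhibit one disjunct of $\varphi_S$ that $w$ satisfies; the natural candidate is the disjunct indexed by $u = w$ itself, namely $\bigwedge_{v \in N} \varphi_{w,v}$. For each $v \in N$, the chosen formula $\varphi_{w,v} \in \Phi^r_{w,v}$ satisfies $w \models \varphi_{w,v}$ by Lemma~\ref{setConsis}. Hence $w$ satisfies the whole conjunction, and therefore the disjunction $\varphi_S$.

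For the negative direction, I would fix an arbitrary $w \in N$ and show $w \not\models \varphi_S$, which means showing that $w$ fails every disjunct. Take any $u \in P$; the corresponding disjunct is $\bigwedge_{v \in N} \varphi_{u,v}$. Since $w \in N$, the conjunction includes the conjunct $\varphi_{u,w} \in \Phi^r_{u,w}$, and by Lemma~\ref{setConsis} we have $w \not\models \varphi_{u,w}$. So $w$ falsifies at least one conjunct of that disjunct, hence the whole disjunct fails on $w$. As this holds for every $u \in P$, $w$ satisfies no disjunct, so $w \not\models \varphi_S$.

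Combining the two directions gives $P \subseteq L(\varphi_S)$ and $N \cap L(\varphi_S) = \emptyset$, which is exactly the definition of $\varphi_S$ being consistent with $S$. I do not anticipate a genuine obstacle here: the argument is essentially bookkeeping over the disjunctive normal form once Lemma~\ref{setConsis} is in hand. The only point requiring mild care is confirming nonemptiness of each $\Phi^r_{u,v}$ so that the construction is well defined; this is where the choice $r = \max EFsim$ is used, and it is worth stating explicitly rather than leaving implicit.
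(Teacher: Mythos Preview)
Your proof is correct and follows essentially the same approach as the paper: both arguments use Lemma~\ref{setConsis} to verify the positive direction via the disjunct indexed by $u$ itself and the negative direction via the conjunct $\varphi_{u,w}$ inside each disjunct. Your explicit check that each $\Phi^r_{u,v}$ is nonempty is a welcome addition that the paper leaves implicit.
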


\begin{proof}
	Let $u \in P$. Then, $\varphi_{u} = \varphi_1 \wedge ... \wedge \varphi_k$ such that, for all $i$, $\varphi_i \in \Phi^r_{u, v}$, for some $v \in N$. In this way, $u \models \varphi_i$, for all $i$ and then $u \models \varphi_S$. Now, let $v \in N$ and assume that $v \models \varphi_S$, i.e, $v \models \varphi_{u}$, for some $u \in P$. Therefore, $\varphi_{u}$ has a conjunct $\varphi \in \Phi^r_{u, v}$. This is an absurd because, from Lemma~\ref{setConsis}, it follows that $v \not\models \varphi$.
\end{proof}

\begin{restatable}{teo}{algoQR} 
	The sentence $\varphi_S$ returned by Algorithm~\ref{alg:distinguir} is a first-order sentence of minimal quantifier rank that is consistent with $S$.
\end{restatable}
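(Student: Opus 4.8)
The plan is to show two things: first, that $\varphi_S$ is consistent with $S$ (this is already established in the preceding Theorem via Lemma~\ref{setConsis}), and second, that no sentence of quantifier rank strictly less than $r$ can be consistent with $S$, where $r = \max\{EFsim(u,v) \mid u \in P, v \in N\}$ is exactly the value chosen by the algorithm. Since every distinguishability formula chosen from $\Phi^r_{u,v}$ has quantifier rank at most $r$ (by the lemma bounding $qr(\varphi) \leq r$), the returned sentence $\varphi_S$ satisfies $qr(\varphi_S) \leq r$. Combined with the lower bound, this pins down $qr(\varphi_S) = r$ as minimal.

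The key step is the lower bound. First I would fix the pair $(u^*, v^*)$ achieving the maximum, so that $EFsim(u^*, v^*) = r$. By the definition of EF-similarity, this means the Spoiler has a winning strategy in $\mathcal{G}_r(u^*, v^*)$ but not in $\mathcal{G}_{r-1}(u^*, v^*)$; hence the Duplicator has a winning strategy in $\mathcal{G}_{r-1}(u^*, v^*)$. Now suppose toward a contradiction that some sentence $\psi$ is consistent with $S$ and $qr(\psi) = r' < r$, so $qr(\psi) \leq r-1$. Since $u^* \in P$ and $v^* \in N$, consistency forces $u^* \models \psi$ and $v^* \not\models \psi$, i.e., $\psi$ distinguishes $u^*$ from $v^*$. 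But by Ehrenfeucht's Theorem (Theorem~\ref{teoEFHintikka}), because the Duplicator wins $\mathcal{G}_{r-1}(u^*, v^*)$ and $qr(\psi) \leq r-1$, we must have $u^* \models \psi$ iff $v^* \models \psi$. This contradicts the fact that $\psi$ separates them, so no such $\psi$ exists and $r$ is indeed the minimal achievable quantifier rank.

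The main obstacle is making the quantifier-rank bookkeeping precise at the boundary: one must be careful that $EFsim$ is defined as the \emph{minimum} number of rounds for which the Spoiler wins, so that $r-1$ rounds genuinely yield a Duplicator win for the chosen extremal pair, and that $\max$ over all pairs is the correct aggregate — a sentence consistent with the whole sample must simultaneously separate \emph{every} positive string from \emph{every} negative one, so it must have quantifier rank at least the EF-similarity of each such pair, hence at least their maximum. Assembling the proof then amounts to combining the upper bound from Theorem~\ref{consisData} together with the lemma $qr(\varphi) \leq r$, and this lower-bound argument; I would present the consistency part as a direct appeal to Theorem~\ref{consisData}, then devote the bulk of the proof to the contradiction argument above, closing with the observation that $qr(\varphi_S) = r$ equals this lower bound and is therefore minimal.
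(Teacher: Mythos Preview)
Your proposal is correct and follows essentially the same approach as the paper: both fix an extremal pair $(u^*,v^*)$ attaining $r=\max\{EFsim(u,v)\mid u\in P,\,v\in N\}$, invoke Ehrenfeucht's Theorem to conclude that any sentence of rank below $r$ cannot separate them, and derive a contradiction with consistency. The only difference is presentational: you are more explicit about the upper bound $qr(\varphi_S)\le r$ (via the lemma on distinguishability formulas) before combining it with the lower bound, whereas the paper simply writes $qr(\varphi_S)=r$ at the outset and proceeds directly to the contradiction.
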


\begin{proof}
	Suppose a first-order sentence $\psi$ consistent with $S$ such that $qr(\psi) < qr(\varphi_S) = max\{EFsim(u, v) \mid u \in P, v \in N\}$. Let $u' \in P$ and $v' \in N$ such that $EFsim(u', v') = max\{EFsim(u, v) \mid u \in P, v \in N\}$. Then, $u'$ and $v'$ are satisfied by the same first-order sentences of quantifier rank $q$ such that $q < EFsim(u', v')$. Then, $u' \models \psi$ iff $v' \models \psi$. Therefore, $\psi$ is not consistent with $S$. This is a contradiction.
\end{proof}

\begin{restatable}{teo}{hypPoly}\label{algoPoly}
	Given a sample $S$, Algorithm~\ref{alg:distinguir} returns $\varphi_S$ in time $O(|S|^7)$.
\end{restatable}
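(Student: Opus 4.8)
The plan is to establish the $O(|S|^7)$ running time by analyzing each of the three phases of Algorithm~\ref{alg:distinguir} separately and bounding their contributions. The dominant cost should come from computing the quantifier rank $r$ and from constructing $\varphi_S$, so the first thing I would do is fix notation: let $n = |S|$ be the total length of all strings, and observe that for any $u \in P$ and $v \in N$ we have $|u| + |v| = O(n)$. The number of pairs $(u,v)$ with $u \in P$ and $v \in N$ is $O(n^2)$, since the number of strings in each set is at most $n$ (each string has length at least one).

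First I would analyze the computation of $r = \max\{EFsim(u,v) \mid u \in P, v \in N\}$. By the result cited from \cite{montanari05}, each $EFsim(u,v)$ can be computed in time $O((|u|+|v|)^2 \log(|u|+|v|)) = O(n^2 \log n)$. Since there are $O(n^2)$ pairs, the total cost of this phase is $O(n^4 \log n)$, which is dominated by $O(n^7)$. Next I would analyze the construction of $\varphi_S$. For each pair $(u,v)$ the algorithm must compute the set $\Phi^r_{u,v}$ and then \textbf{choose} one formula from it. The earlier remark in the paper states that $|\Phi^r_{u,v}| = O((|u|+|v|)^3) = O(n^3)$ and that each formula in it has size $O((|u|+|v|)^2) = O(n^2)$. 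The key subtask here is actually assembling the set $\Phi^r_{u,v}$: this requires computing prefixes, suffixes, and — most expensively — the free multiplicities $\gamma(\alpha,\cdot)$ and free scatterings $\sigma(\alpha,\cdot)$ for all relevant substrings $\alpha$ of $u$ and $v$. By Proposition~\ref{booleanCombPoly} and its proof, each such quantity is computable in polynomial time in the size of the string, and there are $O(n)$ candidate substrings $\alpha$ (those of length $2^q - 1$ occurring in the strings), so building $\Phi^r_{u,v}$ for one pair costs a polynomial in $n$.

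The main obstacle, and the place where I would be most careful, is pinning down exactly which polynomial bounds the construction of $\Phi^r_{u,v}$, since this is what forces the final exponent. I expect the accounting to go roughly as follows: for each pair $(u,v)$, enumerating and populating $\Phi^r_{u,v}$ costs $O(n^3)$ (one per element, with the per-element work of deciding membership and recording the formula absorbed into the bound), and multiplying by the $O(n^2)$ pairs gives $O(n^5)$ for assembling all the sets. The \textbf{choose} step must select, for each pair, a formula that is available — its existence is guaranteed by the remark that the Spoiler wins $\mathcal{G}_r(u,v)$ iff $\Phi^r_{u,v} \neq \emptyset$, together with the fact that $r \geq EFsim(u,v)$ for every pair by definition of $r$. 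Writing out the final sentence $\varphi_S$, a disjunction over $u \in P$ of conjunctions over $v \in N$, involves $O(n^2)$ chosen formulas each of size $O(n^2)$, for a total output size of $O(n^4)$.

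To reach the stated $O(|S|^7)$ I would then take the maximum over all phases and allow slack in the per-pair construction cost; the cleanest route is to bound the work for a single pair $(u,v)$ by $O(n^5)$ (for instance, computing each of the $O(n)$ scatterings via a segmentation in $O(n^2)$ and forming all $O(n^3)$ candidate formulas, with generous constants), and then multiply by the $O(n^2)$ pairs to obtain $O(n^7)$. Since every other phase — computing $r$, choosing formulas, and emitting the output — is bounded by $O(n^4 \log n)$ or less, the overall running time is $O(|S|^7)$. The technical care lies entirely in justifying the per-pair polynomial bound carefully enough that the exponents add up, rather than in any deep idea; the correctness of the \textbf{choose} step rests on results already proved, so the proof is essentially a complexity bookkeeping argument.
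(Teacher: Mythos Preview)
Your proposal is correct and follows essentially the same approach as the paper: bound the computation of $r$ by $O(|S|^4\log|S|)$ via the cited result on $EFsim$, then bound the main loop by $O(|S|^5)$ per pair (from $|\Phi^r_{u,v}| = O(|S|^3)$ formulas each of size $O(|S|^2)$) times $O(|S|^2)$ pairs to get $O(|S|^7)$. The paper's own argument is slightly terser---it simply multiplies the formula count by the formula size without your extra discussion of scattering computations---but the accounting is the same.
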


\begin{proof}
	First, the algorithm computes $max \{EFsim(u,v) \mid u \in P, v \in N \}$ in order to use a suitable quantifier rank. This takes time $O(|S|^4 \times log(|S|))$ because, for a given $u \in P, v \in N$, $|u| + |v| < |S|$, to compute $EFsim(u, v)$ takes time $O(|S|^2 \times log(|S|))$, and this procedure is executed $|S|^2$ times. Then, our algorithm loops over strings in the sample and, in each loop, it chooses a formula $\varphi \in \Phi^r_{u, v}$. As the size of each $\varphi \in \Phi^r_{u,v}$ is $O(|S|^2)$ and $|\Phi^r_{u, v}|$ is $O(|S|^3)$, one iteration of the loop runs in time $O(|S|^5)$. This loop is executed $O(|S|^2)$ times, then, this loop takes time $O(|S|^7)$. The first step runs in time $O(|S|^4log(|S|))$ and the rest takes time $O(|S|^7)$. Therefore the overall complexity of Algorithm~\ref{alg:distinguir} is $O(|S|^7)$.
\end{proof}

Therefore, our algorithm is an improvement over the work in \cite{kaiser12lbg}, for this particular problem on successor string structures. However, observe that the sentence returned by Algorithm~\ref{alg:distinguir} is a disjunction of $|P|$ conjunctions of distinguishability formulas. The algorithm in \cite{kaiser12lbg} also returns formulas which are long and hard to read. Then, they greedily remove subformulas that are not necessary. Now, we also consider the number of conjunctions in our approach. Let $\Phi_S := \bigcup_{u \in P, v \in N} \Phi^{max\{EFsim(u, v) \mid u \in P, v \in N\}}_{u, v}$. We say that a first-order formula is in $m$-DDF (Disjunctive Distinguishability Form) over $\Phi_S$ if it is a disjunction of $m$ conjunctions of distinguishability formulas in $\Phi_S$. Therefore, we can also define a problem where the goal is to find a formula $\varphi_S$ in $m$-DDF such that $m$ is minimum. This formula $\varphi_S$ improves interpretability. Given a sample $S$, the problem consists of finding a first-order sentence $\varphi_S$ in $m$-DDF over $\Phi_S$ such that $\varphi_S$ is consistent with $S$, and $m$ is minimum. 

An algorithm to return a first-order sentence given a sample of strings and a set of first-order sentences is presented in \cite{dnfstrings2018}. Formally, given a sample of strings $S$ and a set of first-order sentences $\Phi$, the goal is to find a first-order sentence $\varphi$ such that $\varphi$ is consistent with $S$, $\varphi$ is a disjunction of conjunctions of sentences in $\Phi$, and the number of conjunctions is minimum. This problem is NP-complete. It is easy to polynomially reduce our problem of distinguishability formulas to this problem in \cite{dnfstrings2018}. Then, our problem of distinguishability formulas is in NP, and it is still a better approach than the one in \cite{kaiser12lbg}, for the particular case of successor string structures. In this new setting, the formula for the sample of Table~\ref{tableExample1} is below. This formula is smaller than the formula of Example~\ref{exampleAlgo}.
$$\varphi_S = \varphi_{pref_1 = s} \wedge \varphi_{\gamma(v) \geq 1}.$$

\section{Conclusions and Future Work}\label{sec5}

Motivated by the framework defined in \cite{kaiser12lbg} and results of the Ehrenfeucht–Fraïssé Game over successor string structures in \cite{montanari05}, we introduced an algorithm that returns a first-order sentence of minimal quantifier rank that is consistent with a given sample of strings. Our algorithm runs in time $O(|S|^7)$ where $|S|$ is the size of the input sample. Then, our algorithm runs in polynomial time in the size of $S$. The algorithm in \cite{kaiser12lbg} runs in exponential time as it works for arbitrary structures.

We designed our algorithm using distinguishability formulas which are defined based on the conditions characterizing a winning strategy for the Spoiler on successor string structures. Our algorithm returns a disjunction of conjunctions of distinguishability formulas. The size of a distinguishability formula is polynomial in the length of two given strings. The algorithm in \cite{kaiser12lbg} uses Hintikka formulas which have size exponential in the size of a given string. Therefore, our proposed algorithm is an improvement over the one in \cite{kaiser12lbg}, for successor string structures. We also show that any first-order sentence is equivalent to a boolean combination of distinguishability formulas. Then, our approach has the potential to find any first-order sentence given the right sample. We also showed how to find a formula with the minimum number of conjunctions. A small formula is preferable for explaining a sample of strings.

Our results are also relevant to grammatical inference, where the goal is to find a language model of minimal size that describes a given sample of strings. In our framework, the language model is the first-order logic over successor string structures, and we use the quantifier rank as a natural measure of first-order sentences. As strings may be used to model sequences of symbolic data, our results can be applied in the analysis of biological sequences \cite{wieczorek2014induction}. A recent model-theoretic approach to grammatical inference \cite{strother2017using} uses a fragment of first-order logic which is less expressive than full first-order logic in our approach.

As future work, we intend to explore the problem of finding a first-order sentence over strings with the linear order relation. First-order logic over strings with the linear order defines the class of star-free languages \cite{thomas1982classifying}, and it is more expressive than first-order logic over successor string structures. Finally, we plan to extend our approach to monadic second-order logic. Regular languages are exactly the languages definable in monadic second-order logic \cite{Buchi60}. An algorithm which returns monadic second-order sentences can be used in the problem of finding a finite automaton consistent with a given sample of strings.

\bibliographystyle{eptcs}
\bibliography{references}
\end{document}